\newtheorem{thm}{Theorem} [section]
\newtheorem{lemma}[thm]{Lemma}
\newtheorem{assmp}{Assumption}[section]
\newtheorem{remark}{Remark}[section]
\numberwithin{equation}{section} %give equations number
\renewcommand{\geq}{\geqslant}
\renewcommand{\leq}{\leqslant}
\newcommand{\E}{{\mathbb E}}
\newcommand{\R}{\ensuremath{\operatorname{\mathbb{R}}}}
\newcommand{\dd}{\ensuremath{\operatorname{d}\! }}
\newcommand{\dt}{\ensuremath{\operatorname{d}\! t}}
\newcommand{\de}{\ensuremath{\operatorname{d}\! e}}
\newcommand{\ds}{\ensuremath{\operatorname{d}\! s}}
\newcommand{\dw}{\ensuremath{\operatorname{d}\! W}}
\newcommand{\dn}{\ensuremath{\operatorname{d}\! N}}
\newcommand{\nn}{\nonumber}
\newcommand{\cM}{\ensuremath{\mathcal{M}}}
\newcommand{\ol}{\mathcal{L}}
\newcommand{\ou}{\mathcal{U}}
\newcommand{\cE}{\ensuremath{\mathcal{E}}}
\newcommand{\al}{\alpha_{t-}}
\begin{document}

\title 
{Optimal mean-variance portfolio selection under regime-switching-induced stock price shocks}
\author{Xiaomin Shi\thanks{School of Statistics and Mathematics, Shandong University of Finance and Economics, Jinan
250100, China. Email: \url{shixm@mail.sdu.edu.cn}}
\and Zuo Quan Xu\thanks{Department of Applied Mathematics, The Hong Kong Polytechnic University, Kowloon, Hong Kong, China. Email: \url{maxu@polyu.edu.hk}}}
\maketitle
\begin{abstract}
In this paper, we investigate mean-variance (MV) portfolio selection problems with jumps in a regime-switching financial model. The novelty of our approach lies in allowing not only the market parameters --- such as the interest rate, appreciation rate, volatility, and jump intensity --- to depend on the market regime, but also in permitting stock prices to experience jumps when the market regime switches, in addition to the usual micro-level jumps. This modeling choice is motivated by empirical observations that stock prices often exhibit sharp declines when the market shifts from a ``bullish'' to a ``bearish'' regime, and vice versa.
By employing the completion-of-squares technique, we derive the optimal portfolio strategy and the efficient frontier, both of which are characterized by three systems of multi-dimensional ordinary differential equations (ODEs). Among these, two systems are linear, while the first one is an $\ell$-dimensional, fully coupled, and highly nonlinear Riccati equation. In the absence of regime-switching-induced stock price shocks, these systems reduce to simple linear ODEs. Thus, the introduction of regime-switching-induced stock price shocks adds significant complexity and challenges to our model.
Additionally, we explore the MV problem under a no-shorting constraint. In this case, the corresponding Riccati equation becomes a $2\ell$-dimensional, fully coupled, nonlinear ODE, for which we establish solvability. The solution is then used to explicitly express the optimal portfolio and the efficient frontier.
\bigskip\\
\textbf{Keywords:} Mean-variance; regime-switching jump; shocks; no-shorting; multi-dimensional Riccati equation; multi-dimensional nonlinear ODEs.
\end{abstract}

\addcontentsline{toc}{section}{\hspace*{1.8em}Abstract}
\section{Introduction}
Research on the mean-variance (MV) portfolio selection problems dates back to the Nobel-Prize-winning work Markowitz \cite{Ma} in the single-period setting. By embedding the problem into a tractable auxiliary stochastic linear-quadratic optimal control problem, Li and Ng \cite{LN} and Zhou and Li \cite{ZL} made a breakthrough on MV problems, respectively, in multi-period and continuous-time settings. Since then, abundant researches on dynamic MV problem have been conducted in more complicated and incomplete financial markets. For instance, Bielecki, Jin, Pliska and Zhou \cite{BJPZ} solved the MV problem with bankruptcy prohibition, Hou and Xu \cite{HX16} incorporated intractable claims, Lim \cite{Lim} allowed for jumps in the underlying assets, {Liu et al \cite{LYLST}} and Yu \cite{Yu} worked in a random horizon market, Zhou and Yin \cite{ZY} featured assets in a regime-switching market.

Li, Zhou and Lim \cite{LZL} solved the MV problem with no-shorting constraints when the wealth dynamic is continuous. They studied the corresponding Hamilton-Jacobi-Bellman (HJB) equation and gave an explicit viscosity solution of it in terms of two decoupled Riccati equations.
The optimal wealth process was implicitly shown to stay in the same region all the time as the initial wealth level. This phenomenon breaks down when it comes to MV problem in a jump diffusion model with no-shorting constraints, and two fully coupled Riccati equations have to be used to delineate the corresponding HJB equations and then the optimal portfolio, see, e.g., Shi and Xu \cite{SX}.

In this paper, we generalize our previous model \cite{SX} to a regime-switching market, where a continuous time Markov chain is employed to reflect the market trends, such as ``bullish market'' and ``bearish market'', or more intermediate regimes between the two extremes.

The study of MV problem with regime-switching was initiated by Zhou and Yin \cite{ZY}, followed notably by Chen and Yao \cite{CY} for short-selling prohibition market, Chen, Yao and Yin \cite{CYY}, Xie \cite{Xie} for asset liability management problem, and Wu \cite{Wu} for a jump diffusion model, etc. The optimal portfolio and efficient frontier were derived explicitly in these papers based on some multi-dimensional Riccati equations whose solvability is evident since they are all linear ordinary differential equations (ODEs). 
{Elliott, Siu, and Badescu~\cite{ESB} studied the MV portfolio selection problem in a hidden Markov regime-switching model, where the state of the underlying Markov chain is unobservable to the investor. By exploiting the separation principle, they addressed the filtering and estimation problem for the Markov chain, and subsequently determined the optimal portfolio using the stochastic maximum principle. Notably, these works focus on pre-commitment strategies, which will also be adopted in this paper. 
There is also a substantial body of literature on time-consistent MV portfolio selection in regime-switching models; see, e.g., Chen, Chiu and Wong \cite{CCW}, Wang, Jin and Wei \cite{WJW19} and references therein.}

In addition to the regime-switching jump diffusion modulated stock price for each fixed regime (market mode) and the no-shorting constraint, the most salient feature of our model that distinguishes \cite{ZY} is as follows: The stock price posses a corresponding jump when the Markov chain $\alpha$ changes its regime. This model has been applied in Busch, Korn and Seifried \cite{BKS} and Gassiat, Gozzi and Pham \cite{GGP} where optimal investment and consumption problems in a large investor model and an illiquid market were studied respectively. In a regime-switching model, the market parameters, such as the interest rate, the appreciation, volatility and jump intensity rates are modulated by a continuous time Markov chain. And there are indeed two state processes, the wealth dynamic $X$ and $\alpha$ in such models. Incorporating a possible jump when the Markov chain $\alpha$ changes its regime in the stock price, consequently $X$, could help us to capture more information of $\alpha$ via the {optional quadratic variation} $[X,\alpha]$ after noting the representation \eqref{alpha}. Needless to say, this will be more closer to the real financial market as well. Mathematically, this will render the corresponding multi-dimensional Riccati equations no longer linear, and more involved to study; see \eqref{P} and \eqref{Riccati}. Therefore, the introduction of regime-switching-induced stock price shocks introduces significant complexities and challenges to our model.
To the best of our knowledge, the nonlinear Riccati equations \eqref{P} and \eqref{Riccati} are new in the literature.
Unfortunately, due to technical limitations, we are compelled to treat all coefficients as deterministic functions of regime, time, and jumps. This simplification transforms the Riccati equations into ODEs. Addressing the case of general stochastic coefficients, where the Riccati equations will be backward stochastic differential equations (BSDEs), presents a significantly more complex challenge.

The remainder of this paper is organized as follows. In Section \ref{sec:pf}, we formulate a MV
portfolio problem in a regime-switching market under regime-switching-induced stock price shocks; and we then resolves its feasibility issue in this part. In Section \ref{sec:noconstraint} and Section \ref{sec:noshort}, we obtain the optimal strategies and efficient frontiers, which are characterized by systems of multi-dimensional, fully coupled, and highly nonlinear ODEs, for the MV problems without trading constraint and with short-selling prohibition respectively.
Section \ref{sec:conclusion} concludes the paper.

\section{Problem formulation and its feasibility issue}\label{sec:pf}
Throughout the paper, let $T$ be a fixed positive constant, standing for the investment horizon.
We denote by $\R^m$ the set of all $m$-dimensional column vectors, by $\R^m_+$ the set of vectors in $\R^m$ whose components are nonnegative, by $\R^m_{++}$ the set of vectors in $\R^m$ whose components are positive, by $\R^{m\times n}$ the set of $m\times n$ real matrices, and by $\mathbb{S}^n$ the set of symmetric $n\times n$ real matrices. For any vector $Y$, we denote $Y_i$ as its $i$-th component. For any vector or matrix $M=(M_{ij})$, we denote its transpose by $M^{\top}$, and its norm by $|M|=\sqrt{\sum_{ij}M_{ij}^2}$. If $M\in\mathbb{S}^n$ is positive definite (resp. positive semidefinite), we write $M>$ (resp. $\geq$) $0.$ We write $A>$ (resp. $\geq$) $B$ if $A, B\in\mathbb{S}^n$ and $A-B>$ (resp. $\geq$) $0.$ We use the standard notations $x^+=\max\{x, 0\}$ and $x^-=\max\{-x, 0\}$ for $x\in\R$.

Let $(\Omega, \mathcal F, \mathbb{F}, \mathbb{P})$ be a fixed complete filtered probability space, on which is defined three independent random processes: a standard $n_1$-dimensional Brownian motion $W_t=(W_{1,t}, \ldots, W_{n_1,t})^{\top}$, an $n_2$-dimensional Poisson random measure $\Phi(\dt,\de)$ defined on $\R_+\times\:\cE$ with a stationary compensator (intensity measure) given by $\nu(\de)\dt=(\nu_1(\de),\ldots,\nu_{n_2}(\de))^{\top}\dt$ satisfying $\sum_{j=1}^{n_2}\nu_j(\cE)<\infty$, where $\mathcal{\cE}\subseteq \R^{n_3}\setminus\{0\}$ is a nonempty Borel subset of $\R^{n_3}$ and $\mathcal{B}(\cE)$ denotes the Borel $\sigma$-algebra generated by $\cE$, and a {time-homogeneous continuous-time Markov chain} $\alpha$ valued in a finite regime space $\cM:=\{1,2,\ldots,\ell\}$ with $\ell\geq 1$. The Markov chain has a generator $Q=(q^{ij})_{\ell\times\ell}$ with $q^{ij}\geq 0$ for $i\neq j$ and $\sum_{j=1}^{\ell}q^{ij}=0$ for every $i\in\cM$. For $i\neq j\in\cM$, we will associate to the Markov chain $\alpha$ a Poisson process $N^{ij}$ with intensity rate $q^{ij}\geq 0$, such that a switch from regime $i$ to $j$ corresponds to a jump of $N^{ij}$ when $\alpha$ is in regime $i$. We denote by $\tilde \Phi(\dt,\de)$ the compensated Poisson random measure of $\Phi(\dt,\de)$, by $\tilde N^{ij}$ the compensated Poisson process of $N^{ij}$.
According to \cite[Eq. (13.71)]{Br}, the Markov chain $\alpha$ admits the representation
\begin{align}\label{alpha}
\dd\alpha_t &=\sum_{j=1}^{\ell}(j-\alpha_{t-})\dn_t^{\alpha_{t-}j} =\sum_{i=1}^{\ell}\sum_{j=1}^{\ell}(j-i)\mathbf{1}_{\{\alpha_{t-}=i\}}\dn_t^{ij}.
\end{align}
The Markov chain $\alpha$ is served to model the regime of the underlying financial market. {The filtration $\mathbb{F}$ is generated by all the processes defined above, including the Brownian motion, the Poisson random measure, the Markov chain and the Poisson processes $\{N^{ij}\}_{i,j\in\cM}$, and is augmented with all the $\mathbb{P}$-null sets.}
All the equations and inequalities in subsequent analysis shall be understood in the sense that $\dd\mathbb{P}$-a.s. or $\dd\nu$-a.e. or $\dt$-a.e. or $\dt\otimes \dd\nu$-a.e. etc, which shall be seen from their expressions.

\subsection{Problem formulation}
Suppose that there are one money account whose price $S_{0,t}$ is given by
\begin{align*}
\dd S_{0,t}=r_t^{\al}S_{0,t}\dt,
\end{align*}
and $m$ tradable stocks in the financial market.
In the market regime $\alpha_t=i$, the $k$-th ($k=1,\ldots,m$) stock price follows the jump-diffusion dynamics
\begin{equation*}
\dd S_{k,t}=S_{k,t}\Big(\mu_{k,t}^{i}\dt+\sum_{j=1}^{n_1}\sigma_{kj,t}^{i}\dw_{j,t}
+\sum_{j=1}^{n_2}\int_{\cE}\beta^i_{kj,t}\tilde \Phi_{j}(\dt,\de)\Big).
\end{equation*}
Moreover, at the time $t$ of transition from $\al=i$ to $\alpha_t=j$, the $k$-th stock price changes as follows:
\begin{equation*}
S_{k,t}=S_{k,t-}(1+\gamma_{k,t}^{i,j}).
\end{equation*}
Overall, the $k$-th stock price is governed by
a regime-switching-induced jump model:
\begin{align*}
\dd S_{k,t}&=S_{k,t}\Big(\mu_{k,t}^{\alpha_{t-}}\dt+\sum_{j=1}^{n_1}\sigma_{kj,t}^{\alpha_{t-}}\dw_{j,t} +\sum_{j=1}^{n_2}\int_{\cE}\beta^{\al}_{kj,t}\tilde \Phi_{j}(\dt,\de)
+\gamma_{k,t}^{\alpha_{t-},\alpha_{t}}\dn^{\alpha_{t-},\alpha_t}_t\Big),
\end{align*}
here and hereafter we use the notation 
$\gamma_{k,t}^{\alpha_{t-},\alpha_{t}}\dn^{\alpha_{t-},\alpha_t}_t$ to stand for
$\sum_{j\neq\alpha_{t-}}\gamma_{k,t}^{\alpha_{t-},j}\dn^{\alpha_{t-},j}_t$.

For each $i,j\in\cM$,
denote by $\mu^i_t:=(\mu^i_{1,t},\ldots,\mu^i_{m,t})^{\top}$, $\sigma^i_{t}:=({\sigma^i_{kj,t}})_{m\times n_1}$, $\gamma^{i,j}_t:=(\gamma^{i,j}_{1,t},\ldots,\gamma^{i,j}_{m,t})^{\top}$, $\beta^i_{t}(e):=(\beta^i_{kk',t}(e))_{m\times n_2}$, and $\beta^i_{k',t}$ the $k'$-th column of $\beta^i_{t}$, $k'=1,\ldots,n_2$.
Throughout the paper, we put the following assumption on the market parameters without claim.

\begin{assmp}\label{assu1}
For each $i,j=1,\ldots,\ell$, $k=1,\ldots,m$, $k'=1,\ldots,n_2$, the coefficients
$r^i_t$, $\mu_{t}^i$, $\sigma_{t}^i$, $\gamma_{t}^{i,j}$ (resp. $\beta^i_{t}$) are deterministic, Borel-measurable and bounded functions on $[0,T]$ (resp. $[0,T]\times\cE$) valued in $\R$, $\R^m$, $\R^{m\times n_1}$, $\R^m$ respectively (resp. valued in $m\times n_2$), with $\beta^i_{kk',t}(e)>-1$ and $\gamma_{k,t}^{i,j}>-1$. And there exists a constant $\delta>0$ such that
\begin{align*}
\Sigma^i_t &:=\sigma^{i}_t(\sigma^{i}_t)^{\top}
+\sum_{k'=1}^{n_2}\int_{\cE}\beta^i_{k',t}(e)\beta^i_{k',t}(e)^{\top}\nu_{k'}(\de) +\sum\limits_{j\neq i}^{\ell}q^{ij}\gamma^{i,j}_t(\gamma^{i,j}_t)^{\top}
\geq\delta\mathbf{1}_m, \ \forall t\in[0,T], \ i\in\cM,
\end{align*}
where $\mathbf{1}_m$ denotes the $m\times m$ identity matrix.
\end{assmp}

We consider that a small investor, whose actions cannot affect the trading asset prices. {He can observe all the driving stochastic processes, especially the Markov chain.} He will decide at every time $t\in[0, T]$ the amount $\pi_{k,t}$ of his wealth to invest in the $k$-th risky asset, $k=1, \ldots, m$. Any {$\mathbb F$-predictable} vector process $\pi:=(\pi_1, \ldots, \pi_m)^{\top}$ is called a portfolio of the investor. Then the investor's self-financing wealth process $X$ corresponding to a portfolio $\pi$ is the strong solution to the SDE:
\begin{align} 
\label{wealth}
\begin{cases}
\dd X_t=[r_t^{\alpha_{t-}}X_{t-}+\pi_t^{\top}\mu_t^{\alpha_{t-}}] \dt+\pi_t^{\top}\sigma_t^{\alpha_{t-}}\dw_t\medskip\\
\qquad\quad+\int_{\cE}\pi_t^{\top}\beta^{\al}_t(e)\tilde \Phi(dt,\de)
+\pi_t^{\top}\gamma_t^{\alpha_{t-},\alpha_{t}}\dn_t^{\alpha_{t-},\alpha_{t}}, \medskip\\
X_0=x, \ \alpha_0=i_0.
\end{cases}
\end{align}
{Zhang, Elliott and Siu \cite{ZES} studied a stochastic control problem where the regime-switching jump shocks were also considered in the state process. In their model, for each $j\in\cM$, the coefficients before $N^{i,j}, \ i\in\cM$ were the same as they regarded $\sum_{i\in\cM,i\neq j} N^{i,j}$ as an integral whole. Whereas in our model, we regard $N^{i,j}, \ i,j\in\cM, \ i\neq j$ as different jump processes, so the coefficients before them could be totally different. From this perspective, our regime-switching-induced jump model is more refined.}
The admissible portfolio set is defined as
\begin{align*}
\mathcal U=\Big\{\pi:[0,T]\times\Omega\to\Pi\;\Big|\; &\pi \ \mbox{is predictable}, \int_0^T\E[\pi_t^2]\dt<\infty \Big\},
\end{align*}
where $\Pi=\R^{m}$ or $\Pi=\R_{+}^{m}$ denotes the trading constraint set. Financially speaking, $\Pi=\R^{m}$ means that there are no trading constraints, while $\Pi=\R_{+}^{m}$ means that shorting-selling of stocks is prohibited.\footnote{Extending our model to the case where only a subset of stocks are permitted for short selling presents no significant theoretical challenges.}
For any $\pi\in \mathcal{U}$, the SDE \eqref{wealth} has a unique strong solution $X$. Denote by $X^{\pi}$ the wealth process \eqref{wealth} whenever it is necessary to indicate its dependence of the portfolio $\pi$.

\par
For a given expectation level $z\in\R$, the investor's MV problem is to
\begin{align}
\mathrm{Minimize}&\quad \mathrm{Var}(X_T^{\pi})=\E[(X_T^{\pi})^{2}-z^2]%
, \nn\medskip\\
\mathrm{s.t.} &\quad
\begin{cases}
\E[X_T^{\pi}]=z, \smallskip\\
\pi\in \mathcal{U}.
\end{cases}
\label{optm}%
\end{align}
Denote by $$\Pi_{z}=\{\pi\;|\;\pi\in\ou, \ \mbox{and} \ \E [X_T^{\pi}]=z\}.$$ 
The MV problem \eqref{optm} is called feasible if
$\Pi_{z}$ is not empty. Correspondingly, any $\pi\in\Pi_{z}$ is called a feasible portfolio for the MV problem \eqref{optm}.

An optimal strategy $\pi^{\ast}$ to \eqref{optm}
is called an efficient strategy corresponding to the level $z$. And $\Big(\sqrt{\mathrm{Var}(X_{T}^{\pi^\ast})}, z\Big)$ is called an efficient point of the MV problem \eqref{optm}, whereas the set of all the efficient points is called the efficient frontier of the MV problem \eqref{optm}. This paper will derive the explicit optimal portfolio for each $z\in\R$ and explicit efficient frontier of the MV problem \eqref{optm}.

\subsection{Feasibility}\label{sec:feas}
In this section, we address the feasibility issue of the MV problem \eqref{optm}, that is, when the set $\Pi_{z}$ is not empty.

The following generalized It\^o's formula for jump diffusion process (see, e.g., \cite{OS}) is crucial for our subsequent analysis.
\begin{lemma}[It\^o's formula]
Given a state process
\begin{align*} 
\dd X_t =&~ b^{\al}_t(X_t)\dt+\sigma^{\al}_t(X_t)\dw_t +\int_{\cE}\xi^{\al}_{t}(X_{t-},e) \Phi(\dt,\de)
+\rho_t^{\al,\alpha_t}(X_{t-})\dd N^{\al,\alpha_t}_t,
\end{align*}
and $\ell$ functions $\varphi^i(\cdot,\cdot)\in C^2([0,T]\times\R), \ i\in\cM$, we have
\begin{align*} 
\dd \varphi^{\alpha_t}(t,X_t)=&~\ol\varphi^{\al}(t,X_t)\dt+\varphi^{\al}_x(t,X_t)\sigma^{\al}_t(X_t)\dw_t\medskip\\
&~+\sum_{k=1}^{n_2}\int_{\cE}[\varphi^{\al}(t,X_{t-}+\xi^{\al}_{k,t}(X_{t-},e))
-\varphi^{\al}(t,X_{t-})] \tilde \Phi_k(\dt,\de)\qquad\qquad\medskip\\
&~+\sum_{j\neq\al}^{\ell}[\varphi^j(t,X_{t-}+\rho^{\al,j}_t(X_{t-}))-\varphi^{\al}(t,X_{t-})]\dd \tilde N^{\al,j}_t,
\end{align*}
where
\begin{align*}
\ol\varphi^i(t,x):=&~\frac{1}{2}\varphi^i_{xx}(t,x)|\sigma^i_t|^2
+\varphi^i_x(t,x)b^i_t(x)\medskip\\
&~+\sum_{k=1}^{n_2}\int_{\cE}[\varphi^i(t,x+\xi^i_{k,t}(x,e))
-\varphi^i(t,x)]\nu_k(\de)\medskip\\
&~+\sum_{j\neq i}^{\ell}q^{ij}[\varphi^j(t,x+\rho_{t}^{ij}(x))
-\varphi^i(t,x)].
\end{align*}
\end{lemma}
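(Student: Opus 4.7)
The plan is to build the formula by superimposing three standard tools: the classical It\^o formula for continuous diffusions, the jump-by-jump compensation argument for Poisson random measures, and the Poisson-process representation \eqref{alpha} of the Markov chain $\alpha$. The key conceptual point is that the process $\varphi^{\alpha_t}(t,X_t)$ is the value of a collection of $C^2$ functions indexed by the regime, so the dynamics between regime switches is governed by the classical jump-diffusion It\^o formula applied to a fixed $\varphi^i$, while the discrete regime switches contribute an additional sum of jump terms which must be compensated against the Markov generator.

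First, I would freeze the regime and apply the standard It\^o formula for jump diffusions (as in \cite{OS}) to $\varphi^i(t,X_t)$ on each stochastic interval $[\tau_k,\tau_{k+1})$ between consecutive switching times of $\alpha$. This yields the drift
\[
\tfrac{1}{2}\varphi^i_{xx}(t,x)|\sigma^i_t|^2+\varphi^i_x(t,x)b^i_t(x)+\sum_{k=1}^{n_2}\int_{\cE}\bigl[\varphi^i(t,x+\xi^i_{k,t}(x,e))-\varphi^i(t,x)\bigr]\nu_k(\de)
\]
together with the Brownian integral $\varphi^i_x(t,X_t)\sigma^i_t(X_t)\dw_t$ and the compensated random measure term. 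Second, I would handle the regime switches by noting that at a jump time $\tau$ with $\alpha_{\tau-}=i$ and $\alpha_\tau=j$, the process $X$ jumps by $\rho^{i,j}_\tau(X_{\tau-})$ and the function index changes from $i$ to $j$ simultaneously, giving a single increment $\varphi^j(\tau,X_{\tau-}+\rho^{i,j}_\tau(X_{\tau-}))-\varphi^i(\tau,X_{\tau-})$. Using \eqref{alpha} to write $\alpha$ as a superposition of Poisson processes $N^{ij}$, the sum of these increments across all switches equals
\[
\sum_{i\in\cM}\sum_{j\neq i}\int_0^t\bigl[\varphi^j(s,X_{s-}+\rho^{i,j}_s(X_{s-}))-\varphi^i(s,X_{s-})\bigr]\mathbf 1_{\{\alpha_{s-}=i\}}\dn_s^{ij}.
\]
Finally, I would compensate each $N^{ij}$ by its intensity $q^{ij}$, which moves a term of the form $\sum_{j\neq i}q^{ij}[\varphi^j-\varphi^i]$ into the drift $\ol\varphi^i$ and leaves the remaining compensated increments as the $\dd\tilde N^{\al,j}$ integrals appearing in the statement.

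The main obstacle is the bookkeeping at the coincident regime-and-state jumps: one must verify that no double counting occurs between the compensated Poisson measure $\tilde\Phi$ and the compensated $\tilde N^{ij}$, which relies on the standing independence of $W$, $\Phi$, and $\{N^{ij}\}$ and on the fact that with probability one no two of these processes jump simultaneously. A secondary technical point is justifying that the local martingale parts (the Brownian integral, the $\tilde\Phi$-integral, and the $\tilde N^{ij}$-integrals) are genuine martingales; this follows from Assumption \ref{assu1}, which ensures boundedness of the coefficients, together with the $C^2$ regularity of $\varphi^i$ on the (effectively bounded on compacts) trajectory of $X$, so standard localization arguments apply.
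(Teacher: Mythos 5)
The paper offers no proof of this lemma at all --- it is quoted from the literature (``see, e.g., \cite{OS}'') --- so there is no internal argument to compare against; your sketch supplies the standard derivation, and it is essentially correct. Freezing the regime on the stochastic intervals between consecutive switching times (a.s.\ finitely many on $[0,T]$ since the chain has finite state space and bounded rates), applying the classical jump-diffusion It\^o formula there (note the measure $\Phi$ enters the state equation uncompensated, which is why the drift correction is $\int[\varphi^i(t,x+\xi^i_{k,t})-\varphi^i(t,x)]\nu_k(\de)$ with no $\varphi^i_x\xi$ term), and then accounting for the switches through the representation \eqref{alpha} and compensating each $N^{ij}$ at rate $q^{ij}$ is exactly how such regime-switching It\^o formulas are established; your observation that $\Phi$ and the $N^{ij}$ a.s.\ have no common jump times is the right justification that no increment is double counted. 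Two small points. First, the frozen-regime It\^o formula also produces the time-derivative term $\varphi^i_t(t,X_t)$ in the drift; your sketch reproduces the displayed $\ol\varphi^i$, where this term appears to have been dropped (presumably a typo in the statement), but you should say explicitly that it belongs there --- the later applications to $P^{\alpha_t}_t(X_t-\lambda h^{\alpha_t}_t)^2$ and $K^{\alpha_t}_t$ rely on the $\dot P$, $\dot h$, $\dot K$ contributions. Second, the lemma is a pathwise semimartingale identity, so your final paragraph about the compensated integrals being genuine martingales is not needed for the statement itself (and Assumption \ref{assu1} concerns the market coefficients, not the generic $b,\sigma,\xi,\rho$ of the lemma); integrability or localization only becomes relevant when expectations are taken in the subsequent theorems.
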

The following result resolves the feasibility issue of the MV problem \eqref{optm}.
\begin{lemma}
Let $\psi=(\psi^1,\ldots,\psi^\ell)$ be the unique solution to the following system of linear ODEs:
\begin{align}
\label{psi}
\begin{cases}
\dot{\psi^{i}_t}=-r^{i}_t\psi^{i}_t-\sum\limits_{j\neq i}^{\ell}q^{ij}(\psi^j_t-\psi^i_t)\medskip\\
\psi^{i}_T=1, \ i\in\cM.
\end{cases}
\end{align}
Then the MV problem \eqref{optm} is feasible for every $z\in\R$ if and only if
\begin{equation}
\sum_{k=1}^{m}\E\bigg[\int_0^T\Big|\psi^{\al}_t\mu^{\al}_{k,t}+\sum_{j\neq\al}q^{\al j}\psi^j_t\gamma^{\al,j}_{k,t}\Big|\dt\bigg]>0 \label{assufeas}
\end{equation}
when $\Pi=\R^m$, or
\begin{equation}
\sum_{k=1}^{m}\E\bigg[\int_0^T\Big(\psi^{\al}_t\mu^{\al}_{k,t}+\sum_{j\neq\al}q^{\al j}\psi^j_{t}\gamma^{\al,j}_{k,t}\Big)^+\dt\bigg]>0 \label{assufeas2}
\end{equation}
when $\Pi=\R^m_+$.
\end{lemma}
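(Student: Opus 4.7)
My plan is to reduce the feasibility question to analyzing the range of a single linear functional on $\mathcal{U}$ by deriving an explicit affine-in-$\pi$ formula for $\mathbb{E}[X_T^{\pi}]$. The ansatz is to apply the generalized It\^{o}'s formula to the auxiliary process $\psi^{\alpha_{t}}_{t} X_{t}^{\pi}$, where $\psi=(\psi^{1},\ldots,\psi^{\ell})$ is the unique solution of the linear ODE system \eqref{psi}; existence, uniqueness, and boundedness of $\psi$ on $[0,T]$ are immediate because \eqref{psi} is a finite-dimensional linear ODE with bounded coefficients.

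With the choice $\varphi^{i}(t,x)=\psi^{i}_{t}x$, the drift term $\overline{\varphi}^{\alpha_{t-}}(t,X_{t-})$ splits into an $X_{t-}$-coefficient equal to $\dot{\psi}^{\alpha_{t-}}_{t}+r^{\alpha_{t-}}_{t}\psi^{\alpha_{t-}}_{t}+\sum_{j\neq \alpha_{t-}}q^{\alpha_{t-},j}(\psi^{j}_{t}-\psi^{\alpha_{t-}}_{t})$ and a $\pi_{t}$-coefficient; the former vanishes by \eqref{psi}. After rewriting \eqref{wealth} in uncompensated form (which introduces an extra $-\sum_{k'}\int_{\cE}\pi_{t}^{\top}\beta^{\alpha_{t-}}_{k',t}(e)\nu_{k'}(\de)\,\dt$ drift) and then applying the quoted It\^{o} formula, the $\beta$-compensation is exactly cancelled by the $\nu_{k'}(\de)$ sum inside $\overline{\varphi}^{\alpha_{t-}}$, and the remaining $\pi_{t}$-drift collapses to $\pi_{t}^{\top}\theta^{\alpha_{t-}}_{t}$, where
$$\theta^{i}_{t}:=\psi^{i}_{t}\mu^{i}_{t}+\sum_{j\neq i}q^{ij}\psi^{j}_{t}\gamma^{i,j}_{t}\in\mathbb{R}^{m}.$$
Integrating over $[0,T]$, using $\psi^{\alpha_{T}}_{T}=1$, and taking expectations — the stochastic integrals against $W$, $\tilde{\Phi}$, and the $\tilde{N}^{ij}$ are genuine martingales because $\psi$, $\sigma^{i}$, $\beta^{i}$, $\gamma^{i,j}$ are bounded and $\pi\in\mathcal{U}$ is square-integrable — yields the key representation
$$\mathbb{E}[X_{T}^{\pi}]=\psi^{i_{0}}_{0}x+\mathbb{E}\int_{0}^{T}\pi_{t}^{\top}\theta^{\alpha_{t-}}_{t}\,\dt.$$

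From this representation, feasibility is a pure range-of-linear-functional question. For $\Pi=\mathbb{R}^{m}$, the map $\pi\mapsto\mathbb{E}\int_{0}^{T}\pi_{t}^{\top}\theta^{\alpha_{t-}}_{t}\,\dt$ is linear on the vector space $\mathcal{U}$, so its range is either $\{0\}$ or all of $\mathbb{R}$. A witness for nontriviality under \eqref{assufeas} is $\pi_{k,t}=c\,\mathrm{sgn}(\theta^{\alpha_{t-}}_{k,t})$ (bounded, hence in $\mathcal{U}$), whose integral evaluates to $c\sum_{k=1}^{m}\mathbb{E}\int_{0}^{T}|\theta^{\alpha_{t-}}_{k,t}|\,\dt$; varying $c\in\mathbb{R}$ attains every $z\in\mathbb{R}$. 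Conversely, if \eqref{assufeas} fails, then $\theta^{\alpha_{t-}}_{t}\equiv 0$ a.e., so $\mathbb{E}[X_{T}^{\pi}]\equiv\psi^{i_{0}}_{0}x$ and only that value of $z$ is attainable. The no-shorting case $\Pi=\mathbb{R}^{m}_{+}$ is handled by the analogous dichotomy, now with the admissible witness $\pi_{k,t}=c\,\mathbf{1}_{\{\theta^{\alpha_{t-}}_{k,t}>0\}}$ for $c\geq 0$; this produces the positive-part sum in \eqref{assufeas2}, and scaling $c$ traces out the relevant range, while failure of \eqref{assufeas2} forces $\pi_{t}^{\top}\theta^{\alpha_{t-}}_{t}\leq 0$ for every $\pi\in\mathcal{U}$, pinning $\mathbb{E}[X_{T}^{\pi}]$ to a single value.

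The main technical obstacle is purely bookkeeping inside the It\^{o} expansion: the wealth SDE \eqref{wealth} mixes the compensated Poisson measure $\tilde{\Phi}$ with the \emph{uncompensated} Markov-chain jump processes $N^{ij}$, whereas the quoted It\^{o} lemma is written with uncompensated $\Phi$ and produces a compensated $\tilde{N}^{ij}$ output. Aligning these conventions cleanly — so that the $\beta$-integrals cancel and the $q^{ij}\psi^{j}_{t}\gamma^{i,j}_{t}$ terms survive — is what delivers the simple form of $\theta^{i}_{t}$ and, in particular, explains why the feasibility criterion involves only $\mu^{i}_{t}$ and the regime-switching shocks $\gamma^{i,j}_{t}$, and not the micro-jump coefficients $\beta^{i}_{k',t}$.
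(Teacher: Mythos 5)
Your route --- applying the generalized It\^o formula to $\psi^{\alpha_t}_t X_t$ to obtain the representation $\E[X_T^\pi]=\psi^{i_0}_0x+\E\int_0^T\pi_t^\top\theta^{\alpha_{t-}}_t\dt$ with $\theta^i_t=\psi^i_t\mu^i_t+\sum_{j\neq i}q^{ij}\psi^j_t\gamma^{i,j}_t$, and then treating feasibility as a question about the range of a linear functional with explicit witness portfolios --- is exactly the argument the paper has in mind (the proof is omitted there with a pointer to Hu--Shi--Xu, Theorem 5.3, which proceeds in this way). The compensated/uncompensated bookkeeping is handled correctly, and the unconstrained case $\Pi=\R^m$ is complete. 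One small technical point: the $\tilde N^{\alpha_{t-},j}$ integral arising from $\psi^{\alpha_t}_tX_t$ carries the factor $(\psi^j_t-\psi^{\alpha_{t-}}_t)X_{t-}$, so the martingale claim also requires $\E\int_0^TX_{t-}^2\dt<\infty$; this follows from standard estimates for \eqref{wealth} with bounded coefficients and $\pi\in\mathcal U$, but it is not covered by ``bounded $\psi,\sigma,\beta,\gamma$ and square-integrable $\pi$'' alone and should be stated.

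The genuine gap is in the no-shorting case. Your witness $\pi_{k,t}=c\,\mathbf 1_{\{\theta^{\alpha_{t-}}_{k,t}>0\}}$ with $c\geq0$ only yields $\E[X_T^\pi]\in[\psi^{i_0}_0x,\infty)$, so it does not establish feasibility ``for every $z\in\R$''; and indeed that literal claim cannot be salvaged: if $\theta^{\alpha_{t-}}_t$ has nonnegative components a.e.\ (one stock, no jumps, positive excess return already gives this), your own representation shows $\E[X_T^\pi]\geq\psi^{i_0}_0x$ for every admissible $\pi$ valued in $\R^m_+$, so any $z<\psi^{i_0}_0x$ is unattainable even though \eqref{assufeas2} holds. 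The statement must therefore be read, as the paper effectively does later when it restricts attention to $z\geq xh_0^{-1}$, as feasibility for every $z$ in the half-line $z\geq\psi^{i_0}_0x$; with that reading your witness suffices, but you should say this explicitly rather than appeal to ``the relevant range''. Relatedly, in your necessity argument the failure of \eqref{assufeas2} does not ``pin $\E[X_T^\pi]$ to a single value'': it forces the positive parts of $\theta^{\alpha_{t-}}_{k,t}$ to vanish a.e., hence only the one-sided bound $\E[X_T^\pi]\leq\psi^{i_0}_0x$ (values below $\psi^{i_0}_0x$ may still be reachable if $\theta$ has a nontrivial negative part). That upper bound is already enough to contradict feasibility for all targeted $z$, so your conclusion stands, but the intermediate claim should be corrected.
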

The proof is similar to \cite[Theorem 5.3]{HSX}, so we omit it.

The way to solve the MV problem \eqref{optm} is rather clear nowadays. To deal with the budget constraint $\E[X_T]=z$, we introduce a Lagrange multiplier $-2\lambda\in\R$ and introducing the following standard stochastic linear quadratic (LQ) control problem:
\begin{equation}\label{optmun}
V(x,i_0;\lambda):=\inf_{\pi\in\mathcal{U}}{\mathbb{E}}[(X_T-\lambda)^{2}]
\end{equation}
for $(x,i_0,\lambda)\in \R\times\cM\times\R$.
According to the Lagrange duality theorem (see Luenberger \cite{Lu}), the problem \eqref{optm} is linked to the problem \eqref{optmun} by
\begin{equation*}%\label{duality}
\inf_{\pi\in\Pi_{z}}\mathrm{Var}(X_T)=\sup_{\lambda\in\R}[V(x,i_0;\lambda)-(\lambda-z)^{2} ].
\end{equation*}
So we can solve the MV problem \eqref{optm} by a two-step procedure: Firstly determine the function $V(x,i_0;\lambda)$, and then try to find a $\lambda^{*}\in\R$ to maximize the function $\lambda\mapsto V(x,i_0;\lambda)-(\lambda-z)^{2}$ over $\R$.

\section{MV portfolio selection without trading constraint}\label{sec:noconstraint} 
In this section, we study the MV portfolio selection problem \eqref{optm} without trading constraint, i.e. $\Pi=\R^m$.

First for any $P=(P^1,\ldots,P^{\ell})^{\top}\in\R^{\ell}_{++}$ and $h=(h^1,\ldots,h^{\ell})^{\top}\in\R^{\ell}$, we define the following mappings:
\begin{align*} 
\mathscr{M}^i_t(P):=&~P^i\mu_t^i+\sum_{j\neq i}^{\ell}q^{ij}P^j\gamma^{i,j}_t,\medskip\\
\mathscr{N}^i_t(P):=&~P^i\sigma^i_t(\sigma^i_t)^{\top}
+P^i\sum_{k=1}^{n_2}\int_{\cE}\beta^i_{k,t}(e)\beta^i_{k,t}(e)^{\top}\nu_{j}(\de) +\sum\limits_{j\neq i}^{\ell}q^{ij}P^j\gamma^{i,j}_t(\gamma^{i,j}_t)^{\top},\medskip\\
\mathscr{R}^i_t(P,h):= &~\sum_{j\neq i}^{\ell}\Big(q^{ij}P^j\gamma^{i,j}_t(h^j-h^i)\Big).
\end{align*}
Then consider the following system of Riccati equations, which consists of an $\ell$-dimensional fully coupled ODE:
\begin{align} 
\label{P}
\begin{cases}
\dot{P^i_t}
=-\Big[2r^i_tP^i_t-\mathscr{M}_t^i(P_t)^{\top}\mathscr{N}_t^i(P_t)^{-1}\mathscr{M}_t^i(P_t) +\sum\limits_{j\neq i}^{\ell}q^{ij}(P^j_t-P^i_t)\Big]\medskip\\
P_T^i=1, \medskip\\
P^i_t>0, \ t\in[0,T], \ i\in\cM.
\end{cases}
\end{align}
and two $\ell$-dimensional linear ODEs:
\begin{align} 
\label{h}
\begin{cases}
\dot{h^{i}_t}=r^{i}_th^{i}_t+\frac{1}{P^{i}_t}\mathscr{M}_t^i(P_t)^{\top}\mathscr{N}_t^i(P_t)^{-1}\mathscr{R}^i_t(P_t,h_t) -\frac{1}{P^i_t}\sum\limits_{j\neq i}^{\ell}q^{ij}P^j_t(h^j_t-h^i_t)\medskip\\
h^{i}_T=1, \ i\in\cM,
\end{cases}
\end{align}
and
\begin{equation}
\label{K}
\begin{cases}
\dot{K^{i}_t}=\mathscr{R}^i_t(P_t,h_t)^{\top}\mathscr{N}_t^i(P_t)^{-1}\mathscr{R}^i_t(P_t,h_t) -\sum\limits_{j\neq i}^{\ell}\Big(q^{ij}P^j_t(h^j_t-h^i_t)^2\Big)-\sum\limits_{j\neq i}^{\ell}q^{ij}(K^j_t-K^i_t)\medskip\\
K^{i}_T=0, \ i\in\cM.
\end{cases}
\end{equation}

\begin{remark}
Note that \eqref{P} is a highly nonlinear ODE. By contrast, in the absence of regime-switching-induced stock price shocks, i.e. $\gamma^{i,j}=0$, $ i,j\in\cM$, \eqref{P} degenerates to a linear ODE and \eqref{P} and \eqref{h} degenerate to \cite[Eq. (4.2) and (4.3)]{ZY}. Therefore, the introduction of regime-switching-induced stock price shocks introduces significant complexities and challenges to our model.
\end{remark}

The ODEs \eqref{P}, \eqref{h}, \eqref{K} are coupled together, but we can solve them one by one, since the former does not depend on the later. In particular the latter two are linear ones, so their solvability is immediately ready.

\begin{thm}\label{Th:P}
Under Assumption \ref{assu1}, there exist, respectively, unique solutions $P=(P^1,\ldots,P^{\ell})$ to the Riccati equation \eqref{P}, $h=(h^1,\ldots,h^{\ell})$ to the ODE \eqref{h} and $K=(K^1,\ldots,K^{\ell})$ to the ODE \eqref{K}, such that $P^i_t>0,\ t\in[0,T], \ i\in\cM$.
\end{thm}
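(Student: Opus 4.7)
The three systems are triangular: \eqref{P} is self-contained, \eqref{h} depends only on $P$, and \eqref{K} depends only on $(P,h)$. Once a strictly positive continuous $P$ has been constructed on $[0,T]$, Assumption~\ref{assu1} delivers
\[\mathscr N^i_t(P_t) \ge \bigl(\min_{j\in\cM}P^j_t\bigr)\Sigma^i_t \ge \bigl(\min_{j\in\cM}P^j_t\bigr)\delta\,\mathbf{1}_m > 0,\]
so $\mathscr N^i_t(P_t)^{-1}$ is a bounded measurable function of $t$; \eqref{h} and then \eqref{K} are linear ODEs in $h$ and $K$ with bounded coefficients, whose unique global solvability is classical. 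All the substantive work therefore concentrates on the Riccati system \eqref{P}.

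For \eqref{P} I would begin with local backward solvability. On the open set $\mathcal O := \{P\in\R^{\ell} : P^i > 0 \text{ for all } i\in\cM\}$, the estimate just displayed shows that $P\mapsto \mathscr N^i_t(P)^{-1}$ is smooth, so the right-hand side of \eqref{P} is locally Lipschitz in $P$ uniformly in $t\in[0,T]$. Because the terminal datum $(1,\ldots,1)^{\top}$ lies in $\mathcal O$, the Cauchy--Lipschitz theorem produces a unique maximal backward solution on some interval $(\tau,T]\subseteq[0,T]$ with $P_t\in\mathcal O$; either $\tau=0$, or $\min_i P^i_t\to 0$, or $|P_t|\to\infty$ as $t\downarrow\tau$.

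To rule out the last two scenarios I would establish two-sided a priori bounds $0<c\le P^i_t\le C<\infty$, uniform in $\tau$. On $(\tau,T]$ the Riccati solution is already classical, so I may apply the \ito from the preceding lemma to $v(s,x,i):=P^i_s x^2$ along an arbitrary admissible wealth process \eqref{wealth} and complete squares; this identifies $P^i_t x^2$ with the value function $\inf_{\pi\in\ou}\E\bigl[(X_T^\pi)^2\,\big|\,X_t=x,\,\alpha_t=i\bigr]$ of an auxiliary LQ problem on $[t,T]$, the infimum being attained by the feedback $\pi^*_s = -X_{s-}\mathscr N^{\alpha_{s-}}_s(P_s)^{-1}\mathscr M^{\alpha_{s-}}_s(P_s)/P^{\alpha_{s-}}_s$. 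The upper bound is then immediate from the sub-optimal choice $\pi\equiv 0$, giving $P^i_t x^2 \le x^2\exp\bigl(2\int_t^T\max_{j\in\cM}|r^j_s|\,\ds\bigr)$. The lower bound follows by deflation: applying the \ito to $\psi^{\alpha_s}_s X_s$ via \eqref{psi} shows that $\E[\psi^{\alpha_T}_T X_T^\pi]$ equals $\psi^i_t x$ plus a term linear in $\pi$ which, by Cauchy--Schwarz and the admissibility norm, is controlled by $\E[(X_T^\pi)^2]^{1/2}$ times a uniform constant; rearranging and invoking Jensen on the left-hand side yields $\E[(X_T^\pi)^2] \ge c\,x^2$ uniformly over $\pi\in\ou$, whence $P^i_t \ge c > 0$. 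These bounds preclude exit from $\mathcal O$ and blow-up, so $\tau=0$, and global existence and (via the local uniqueness) uniqueness on $[0,T]$ follow.

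The main obstacle is the positivity lower bound $P^i_t\ge c>0$: the value-function identification above is delicate because the verification requires $P$ to be classical on the full interval considered. The clean way to close the loop is to run the verification only on the a priori open interval $(\tau,T]$ obtained from the local existence step, where all ingredients are unambiguously defined, and derive the bounds there before letting $\tau\downarrow 0$. Alternatively, one may define $P^i_t$ directly as the LQ value function and derive \eqref{P} as its HJB equation a posteriori via dynamic programming and the \ito from the preceding lemma, thereby sidestepping the circularity entirely; I expect this second route to be the more economical.
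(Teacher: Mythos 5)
Your overall architecture is different from the paper's and much of it is sound: the triangular structure, the bound $\mathscr N^i_t(P_t)\ge(\min_j P^j_t)\delta\,\mathbf 1_m$, the resulting linear solvability of \eqref{h} and \eqref{K}, local backward existence on the positive orthant (strictly speaking Carath\'eodory rather than Cauchy--Lipschitz, since the coefficients are only measurable in $t$), the upper bound via the suboptimal choice $\pi\equiv 0$, and the continuation/uniqueness argument are all fine. The genuine gap is in your lower bound. After applying It\^o to $\psi^{\alpha_s}_s X_s$, the leftover term is $\E\int_t^T\big[\psi^{\alpha_{s-}}_s\pi_s^{\top}\mu^{\alpha_{s-}}_s+\sum_{j\neq\alpha_{s-}}q^{\alpha_{s-}j}\psi^j_s\pi_s^{\top}\gamma^{\alpha_{s-},j}_s\big]\ds$, and your claim that this is ``controlled by $\E[(X_T^\pi)^2]^{1/2}$ times a uniform constant'' is exactly the assertion $\E\int_t^T|\pi_s|\,\ds\le C\big(|x|+\E[(X_T^\pi)^2]^{1/2}\big)$ uniformly over $\pi\in\mathcal U$. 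Cauchy--Schwarz only gives control by the admissibility norm $\big(\E\int_t^T|\pi_s|^2\ds\big)^{1/2}$, which is finite for each $\pi$ but not uniformly dominated by the terminal second moment without a separate coercivity argument; such a coercivity estimate is essentially of the same depth as the uniform positivity $P^i_t\ge c>0$ you are trying to prove, so as written the step is circular/unjustified.

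The step can be repaired, but not with $\psi$: replace it by a genuine deflator whose kernel kills the $\pi$-linear term identically, e.g.\ $\dd\rho_s=\rho_{s-}\big[-r^{\alpha_{s-}}_s\ds-(\phi^{\alpha_{s-}}_s)^{\top}\sigma^{\alpha_{s-}}_s\dw_s-\int_{\cE}(\phi^{\alpha_{s-}}_s)^{\top}\beta^{\alpha_{s-}}_s\tilde\Phi(\ds,\de)-\sum_{j\neq\alpha_{s-}}(\phi^{\alpha_{s-}}_s)^{\top}\gamma^{\alpha_{s-},j}_s\dd\tilde N^{\alpha_{s-},j}_s\big]$ with $\phi^i_s:=(\Sigma^i_s)^{-1}\big(\mu^i_s+\sum_{j\neq i}q^{ij}\gamma^{i,j}_s\big)$; then $\rho X^\pi$ is a martingale for every admissible $\pi$, $\E[\rho_T X_T^\pi]=x$, and Cauchy--Schwarz with $\E[\rho_T^2]\le C$ gives $\E[(X_T^\pi)^2]\ge cx^2$ uniformly, which is all your continuation argument needs. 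For comparison, the paper does not argue probabilistically at all: Theorem \ref{Th:P} is obtained as the special case of Theorem \ref{existence} (see Remark \ref{special}), whose proof truncates the driver as in \eqref{Riccatitrun} and then establishes the two-sided bounds deterministically, comparing with the explicit barriers $e^{2c_1(T-t)}$ and $e^{-c_2(T-t)}$ (the constant $c_2$ coming from \eqref{geq}, i.e.\ precisely the quantity $(\mu+\sum_{j\neq i}q^{ij}\gamma^{ij})^{\top}(\Sigma)^{-1}(\mu+\sum_{j\neq i}q^{ij}\gamma^{ij})$ that also underlies the deflator) via Gronwall, and proving uniqueness through the Lipschitz truncated equation. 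Your route, once the lower bound is fixed as above, is a legitimate alternative, but it is heavier, since the value-function identification requires the verification computation of Theorem \ref{Th3} plus martingale/integrability checks on the open interval $(\tau,T]$, whereas the paper's comparison argument is self-contained ODE analysis and simultaneously covers the constrained $2\ell$-dimensional system \eqref{Riccati}, which a verification-based identification with an unconstrained LQ value function would not.
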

The proof of Theorem \ref{Th:P} will be covered by Theorem \ref{existence} (see Remark \ref{special}), which handles a more general $2\ell$-dimensional Riccati equation, so we omit it.

\begin{remark}
If the interest rate $r_t$ is independent of the Markov chain $\alpha$, then the unique solutions to the ODEs \eqref{h} and \eqref{K} are given by
\[
h^i_t=e^{-\int_t^Tr_s\ds}, \ K^i_t=0, \ t\in[0,T], \ i\in\cM.
\]
\end{remark}

\begin{thm}\label{Th3}
Suppose Assumption \ref{assu1} {holds}.
Let $(P,h,K)$ be the unique solution to the ODE system \eqref{P}, \eqref{h}, \eqref{K}.
Then the LQ problem \eqref{optmun} admits a unique optimal control, as a feedback function of the time $t$, the state $X$, and the Markov chain $\alpha$, given by
\begin{align}\label{pistar1}
\pi^*(t,X_{t-},\alpha_{t-})=-\Big[\mathscr{N}^{\al}_t(P_t)\Big]^{-1} \Big[\mathscr{M}_t^{\al}(P_t)(X_{t-}-\lambda h_t^{\al})
-\lambda\mathscr{R}_t^{\al}(P_t,h_t)\Big],
\end{align}
and the corresponding optimal value is
\begin{align}\label{value1}
V(x,i_0;\lambda) 
&=P_0^{i_0}(x-\lambda h_0^{i_0})^2+\lambda^2K^{i_0}_0.
\end{align}
\end{thm}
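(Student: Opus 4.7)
The natural Lyapunov function is $\varphi^i(t,x) := P^i_t(x-\lambda h^i_t)^2+\lambda^2 K^i_t$, $i\in\cM$, chosen so that the terminal conditions in \eqref{P}--\eqref{K} give $\varphi^i(T,x)=(x-\lambda)^2$. The plan is to apply the generalized Itô formula stated above to $\varphi^{\alpha_t}(t,X^\pi_t)$ for an arbitrary admissible $\pi\in\ou$, reorganize the drift as a quadratic form in $\pi$, and complete the square around the candidate feedback $\pi^*$ in \eqref{pistar1}; the three systems \eqref{P}, \eqref{h}, \eqref{K} are tailored precisely so that the $\pi$-independent remainder vanishes.

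Concretely, I would first expand $\ol\varphi^{\al}(t,X_{t-})$ using the coefficients of $X^\pi$ in \eqref{wealth} and group the terms by powers of $\pi_t$ and by $y:=X_{t-}-\lambda h^{\al}_t$. The quadratic-in-$\pi$ coefficient collects the diffusion part $P^i_t\pi^\top\sigma^i_t(\sigma^i_t)^\top\pi$, the compensated-Poisson-measure part $P^i_t\pi^\top\bigl[\sum_{k'}\int\beta^i_{k',t}(e)\beta^i_{k',t}(e)^\top\nu_{k'}(\de)\bigr]\pi$, and the regime-jump part $\sum_{j\neq i}q^{ij}P^j_t(\pi^\top\gamma^{i,j}_t)^2$, which together equal $\pi^\top\mathscr{N}^{\al}_t(P_t)\pi$. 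After expanding $P^j_t(X_{t-}+\pi^\top\gamma^{i,j}-\lambda h^j_t)^2-P^i_t(X_{t-}-\lambda h^i_t)^2$ inside the regime-switch generator, the linear-in-$\pi$ coefficient reduces to $2[\mathscr{M}^{\al}_t(P_t)y-\lambda\mathscr{R}^{\al}_t(P_t,h_t)]$. Completing the square then produces $(\pi-\pi^*)^\top\mathscr{N}^{\al}_t(P_t)(\pi-\pi^*)$ plus a $\pi$-free remainder that is a polynomial of total degree two in $(y,\lambda)$; matching the $y^2$-, $y$-, and $\lambda^2$-coefficients against \eqref{P}, \eqref{h}, and \eqref{K} respectively shows this remainder is identically zero.

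Consequently
\begin{equation*}
\dd\varphi^{\alpha_t}(t,X^\pi_t)=(\pi_t-\pi^*_t)^\top\mathscr{N}^{\al}_t(P_t)(\pi_t-\pi^*_t)\,\dt+\dd M_t,
\end{equation*}
where $M$ is a local martingale assembled from the $\dw$, $\tilde\Phi$, and $\tilde N^{ij}$ integrals. Using the boundedness of $(P,h,K)$ provided by Theorem \ref{Th:P}, the admissibility bound $\E\int_0^T|\pi_t|^2\dt<\infty$, and standard moment estimates on $X^\pi$, I localize $M$ by stopping times $\tau_n\nearrow T$ and pass to the limit after taking expectations to obtain
\begin{equation*}
\E[(X_T^\pi-\lambda)^2]=P^{i_0}_0(x-\lambda h^{i_0}_0)^2+\lambda^2 K^{i_0}_0+\E\int_0^T(\pi_t-\pi^*_t)^\top\mathscr{N}^{\al}_t(P_t)(\pi_t-\pi^*_t)\,\dt.
\end{equation*}
Assumption \ref{assu1} together with $P^i>0$ gives $\mathscr{N}^{\al}_t(P_t)\geq\delta\bigl(\min_{i,t}P^i_t\bigr)\mathbf{1}_m>0$, so the integral is nonnegative and vanishes only when $\pi\equiv\pi^*$, proving both \eqref{value1} and uniqueness of the minimizer.

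It remains to verify that $\pi^*$ itself lies in $\ou$. Substituting \eqref{pistar1} into \eqref{wealth} yields a linear jump-diffusion SDE for $X^{\pi^*}$ with bounded, regime-dependent deterministic coefficients (because $P$, $h$, $\mathscr{M}(P)$, $\mathscr{R}(P,h)$, and $\mathscr{N}(P)^{-1}$ are bounded), so a routine Grönwall estimate yields $\sup_{t\in[0,T]}\E[(X^{\pi^*}_t)^2]<\infty$, and hence $\int_0^T\E[(\pi^*_t)^2]\dt<\infty$. The main technical obstacle is the completion-of-squares bookkeeping: the regime-switch shocks $\gamma^{i,j}$ contribute simultaneously to $\mathscr{N}$, $\mathscr{M}$, and $\mathscr{R}$ as well as to the off-diagonal $q^{ij}(P^j-P^i)y^2$ correction in \eqref{P}, and it is precisely this coupling that forces the nonlinearity of the Riccati equation \eqref{P} and drives the appearance of the linear systems \eqref{h} and \eqref{K}.
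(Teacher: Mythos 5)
Your proposal is correct and follows essentially the same route as the paper: apply the generalized It\^o formula to $P^{\alpha_t}_t(X_t-\lambda h^{\alpha_t}_t)^2+\lambda^2K^{\alpha_t}_t$, complete the square in $\pi$ using the fact that \eqref{P}, \eqref{h}, \eqref{K} annihilate the $\pi$-free remainder, take expectations to get $\E[(X_T-\lambda)^2]=P^{i_0}_0(x-\lambda h^{i_0}_0)^2+\lambda^2K^{i_0}_0+\E\int_0^T(\pi_t-\pi^*_t)^\top\mathscr{N}^{\al}_t(P_t)(\pi_t-\pi^*_t)\dt$, and conclude via $\mathscr{N}^{\al}_t(P_t)>0$ together with admissibility of $\pi^*$ from the resulting linear SDE with bounded coefficients. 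Your treatment is, if anything, slightly more explicit than the paper's on the localization of the local-martingale part and on verifying $\pi^*\in\mathcal U$, but the underlying argument is the same completion-of-squares verification.
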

\begin{proof}
For any $\pi\in\mathcal{U}$, 
applying It\^o's formula to
$P_t^{\alpha_{t}}(X_t-\lambda h_t^{\alpha_{t}})^2$ and $\lambda^2K^{\alpha_{t}}_t$ respectively, we get
\begin{align*}
\dd \Big[P_t^{\alpha_{t}}(X_t-\lambda h_t^{\alpha_{t}})^2\Big] 
&= \Big[(\pi_t-\pi^*_t)^{\top} \mathscr{N}^{\al}_t(P_t)(\pi_t-\pi^*_t)\medskip\\
&\qquad-\lambda ^2\mathscr{R}^{\al}_t(P_t,h_t)^{\top}\mathscr{N}_t^{\al}(P_t)^{-1}\mathscr{R}^{\al}_t(P_t,h_t)\medskip\\
&\qquad+\lambda ^2\sum_{j\neq\al}^{\ell}q^{\al j}P^j_t(h^j_t-h^{\al}_t)^2\Big]\medskip\\
&\quad\;+(\cdots)\dw_t+(\cdots) \tilde \Phi_k(\dt,\de)+(\cdots)\dd \tilde N^{\al,j},
\end{align*}
where $\pi^*$ is defined in \eqref{pistar1}, and
\begin{align*}
\dd \ [\lambda^2K^{\alpha_{t}}_t]&=\lambda ^2 \Big[\mathscr{R}^{\al}_t(P_t,h_t)^{\top}\mathscr{N}_t^{\al}(P_t)^{-1}\mathscr{R}^{\al}_t(P_t,h_t) -\sum_{j\neq\al}^{\ell}q^{\al j}P^j_t(h^j_t-h^{\al}_t)^2\Big]\dt\medskip\\
&\quad+(\cdots)\dd \tilde \Phi^{\al,j}.
\end{align*}
Summing the above two equations, integrating from $0$ to $T$ and taking expectation, we obtain, noting the terminal conditions of $P,h,K$ in \eqref{P}, \eqref{h}, \eqref{K},
\begin{align*} 
\E[(X_T-\lambda)^2] 
&=\E\Big[P_T^{\alpha_{T}}(X_T-\lambda h_T^{\alpha_{T}})^2+\lambda^2K^{\alpha_{T}}_T\Big]\medskip\\
&=P_0^{i_0}(x-\lambda h_0^{i_0})^2+\lambda^2K_0^{i_0} +\E\int_0^T (\pi_t-\pi^*_t)^{\top} \mathscr{N}^{\al}_t(P_t)(\pi_t-\pi^*_t)\dt.
\end{align*}
Since $\mathscr{N}^{\al}_t(P_t)>0$, it follows immediately that the optimal control is given by \eqref{pistar1} and the optimal value is given by \eqref{value1}, provided that the corresponding wealth equation \eqref{wealth} under the feedback control \eqref{pistar1} admits a solution. However, under \eqref{pistar1}, the equation \eqref{wealth} is a linear SDE with bounded coefficients, the existence of a unique strong solution is straightforward.
\end{proof}

\begin{remark}
By definition, the optimal value of the LQ problem \eqref{optmun} is nonnegative. But it is vague from the right-hand side of \eqref{value1} unless we can show $K_0^{i_0}\geq 0$.

Indeed, we do have $K_0^{i_0}\geq 0$. To see this, we first get from \eqref{K} that
\begin{align*} 
K_0^{i_0}&=-\E\int_{0}^T\Big[\mathscr{R}^{\al}_t(P_t,h_t)^{\top}\mathscr{N}_t^{\al}(P_t)^{-1}
\mathscr{R}^{\al}_t(P_t,h_t) -\sum\limits_{j\neq \al}^{\ell}\Big(q^{\al j}P^j_t(h^j_t-h^{\al}_t)^2\Big)\Big]\dt.
\end{align*}
Hence, it suffices to show
\begin{equation*}%\label{Kgeq0}
\mathscr{R}^i_t(P_t,h_t)^{\top}\mathscr{N}_t^i(P_t)^{-1}\mathscr{R}^i_t(P_t,h_t)
\leq\xi:=\sum\limits_{j\neq i}^{\ell}\Big(q^{ij}P^j_t(h^j_t-h^i_t)^2\Big).
\end{equation*}
Clearly $\xi\geq0$.
According to \cite[Lemma A.1]{HSX2}, we know (where the argument $(P_t,h_t)$ is suppressed)
\begin{equation*}
\Big(\xi
-(\mathscr{R}^i)^{\top}(\mathscr{N}^i)^{-1}\mathscr{R}^i\Big)\mathrm{det}(\mathscr{N}^i)= \xi
\mathrm{det}\Big(\mathscr{N}^i-\frac{1}{\xi}\mathscr{R}^i(\mathscr{R}^i)^{\top}\Big).
\end{equation*}
From the definition of $\mathscr{N}^i$, we immediately get
\begin{equation*}
\mathscr{N}^i-\frac{1}{\xi}\mathscr{R}^i(\mathscr{R}^i)^{\top}\geq \sum\limits_{j\neq i}^{\ell}q^{ij}P^j\gamma^{i,j}_t(\gamma^{i,j}_t)^{\top}
-\frac{1}{\xi}\mathscr{R}^i(\mathscr{R}^i)^{\top}.
\end{equation*}
So it is sufficient to show that
\begin{equation*}
\xi\sum\limits_{j\neq i}^{\ell}q^{ij}P^j\gamma^{i,j}_t(\gamma^{i,j}_t)^{\top}
-\mathscr{R}^i(\mathscr{R}^i)^{\top}\geq 0.
\end{equation*}
In fact, for any $y\in\R^m$, it follows from the Cauchy-Schwarz inequality that
\begin{align*} 
y^{\top}\mathscr{R}^i(\mathscr{R}^i)^{\top}y 
&=\Big|\sum_{j\neq i}^{\ell}\Big(q^{ij}P^j(h^j-h^i)y^{\top}\gamma^{i,j}_t\Big)\Big|^2\medskip\\
&\leq \Big[\sum_{j\neq i}^{\ell}\Big(q^{ij}P^j(h^j-h^i)^2\Big)\Big]\Big[\sum_{j\neq i}^{\ell}\Big(q^{ij}P^j|y^{\top}\gamma^{ij}|^2\Big)\Big]\medskip\\
&=\xi y^{\top}\sum\limits_{j\neq i}^{\ell}q^{ij}P^j\gamma^{i,j}_t(\gamma^{i,j}_t)^{\top}y.
\end{align*}
This completes the proof of $K_0^{i_0}\geq 0$.
\end{remark}

Next we proceed to derive the optimal portfolio and efficient frontier for the original MV problem \eqref{optm}. {Recall that $z$ is the given mean level, $x$ is the initial wealth, and $i_0$ is the initial state of the Markov chain.}

\begin{thm}\label{Th:efficient}
Under the conditions of Theorem \ref{Th3} and suppose the feasible condition \eqref{assufeas} holds. Then we have
\begin{align}\label{nonnegative}
1-P^{i_0}_0(h^{i_0}_0)^2-K^{i_0}_0>0.
\end{align}
Moreover, the feedback portfolio defined by
\begin{align}\label{efficient}
\pi^*(t,X_{t-},\alpha_{t-})=-\Big[\mathscr{N}^{\al}_t(P_t)\Big]^{-1} \Big[\mathscr{M}_t^{\al}(P_t)(X_{t-}-\lambda^*h_t^{\al})
-\lambda^*\mathscr{R}_t^{\al}(P_t,h_t)\Big],
\end{align}
is optimal for the MV problem \eqref{optm}, where
\begin{align}\label{lambdastar}
\lambda^*=\frac{z-xP^{i_0}_0h^{i_0}_0}{1-P^{i_0}_0(h^{i_0}_0)^2-K^{i_0}_0}.
\end{align}
Furthermore, the efficient frontier to the problem \eqref{optm} is determined by
\begin{align}\label{efficientfron}
\mathrm{Var}(X^{\pi^{*}}_T)&=\frac{P^{i_0}_0(h^{i_0}_0)^2+K^{i_0}_0}{1-P^{i_0}_0(h^{i_0}_0)^2-K^{i_0}_0} 
\bigg(\E[X^{\pi^{*}}_T]-\frac{P^{i_0}_0h^{i_0}_0}{P^{i_0}_0(h^{i_0}_0)^2
+K^{i_0}_0}x\bigg)^2 +\frac{P^{i_0}_0K^{i_0}_0}{P^{i_0}_0(h^{i_0}_0)^2+K^{i_0}_0}x^2,
\end{align}
where
$$\E[X^{\pi^{*}}_T]\geq \frac{P^{i_0}_0h^{i_0}_0}{P^{i_0}_0(h^{i_0}_0)^2+K^{i_0}_0}x.$$
\end{thm}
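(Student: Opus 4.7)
The strategy is classical Lagrangian duality, using the identity
\[
\inf_{\pi\in\Pi_{z}}\mathrm{Var}(X_{T}) \;=\; \sup_{\lambda\in\R}\bigl[V(x,i_0;\lambda)-(\lambda-z)^{2}\bigr]
\]
recorded at the end of Section~\ref{sec:pf}. Inserting the closed form $V(x,i_{0};\lambda)=P_{0}^{i_{0}}(x-\lambda h_{0}^{i_{0}})^{2}+\lambda^{2}K_{0}^{i_{0}}$ from \citethm{Th3}, the dual objective becomes the explicit quadratic in $\lambda$
\[
F(\lambda;z):=-\bigl[1-P_{0}^{i_{0}}(h_{0}^{i_{0}})^{2}-K_{0}^{i_{0}}\bigr]\lambda^{2}+2\bigl[z-xP_{0}^{i_{0}}h_{0}^{i_{0}}\bigr]\lambda+P_{0}^{i_{0}}x^{2}-z^{2}.
\]
Extracting its maximizer and maximum value produces \eqref{lambdastar}, \eqref{efficient}, and \eqref{efficientfron} in one computation.

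The substantive step is \eqref{nonnegative}; once it is secured the remainder is routine. The leading coefficient of $F(\cdot;z)$ must be non-positive, for otherwise $\sup_{\lambda}F(\lambda;z)=+\infty$ forces $\Pi_{z}=\emptyset$, contradicting \eqref{assufeas}. To exclude the borderline case $1-P_{0}^{i_{0}}(h_{0}^{i_{0}})^{2}-K_{0}^{i_{0}}=0$, I would argue by contradiction: $F(\cdot;z)$ would then be \emph{linear} in $\lambda$ with slope $2(z-xP_{0}^{i_{0}}h_{0}^{i_{0}})$, so choosing any $z\neq xP_{0}^{i_{0}}h_{0}^{i_{0}}$ (allowed because \eqref{assufeas} guarantees feasibility for \emph{every} $z\in\R$) again yields $\sup_{\lambda}F(\lambda;z)=+\infty$, contradicting feasibility. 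Pinning down the strict sign by exploiting feasibility uniformly in $z$ is the main obstacle and consumes the full strength of \eqref{assufeas}.

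With \eqref{nonnegative} in hand, $F(\cdot;z)$ is strictly concave and attains its unique maximum at $\lambda^{*}$ given by \eqref{lambdastar}. By Lagrange duality, the LQ minimizer from \citethm{Th3} evaluated at $\lambda=\lambda^{*}$, namely \eqref{efficient}, solves the MV problem \eqref{optm}. To certify admissibility $\E[X_{T}^{\pi^{*}}]=z$, I would invoke an envelope argument on $V(x,i_{0};\lambda)=\E[(X_{T}^{\pi^{*}(\lambda)}-\lambda)^{2}]$: uniqueness of the inner minimizer gives $\partial_{\lambda}V=-2\E[X_{T}^{\pi^{*}(\lambda)}-\lambda]$, and the first-order condition $\partial_{\lambda}[V-(\lambda-z)^{2}]=0$ at $\lambda^{*}$ then yields $\E[X_{T}^{\pi^{*}}]=z$.

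The frontier formula \eqref{efficientfron} follows by substituting $\lambda^{*}$ back into $F$ and completing the square in $z$ around $z_{0}:=\dfrac{P_{0}^{i_{0}}h_{0}^{i_{0}}}{P_{0}^{i_{0}}(h_{0}^{i_{0}})^{2}+K_{0}^{i_{0}}}\,x$; the routine algebra collapses via the identity
\[
P_{0}^{i_{0}}\bigl(P_{0}^{i_{0}}(h_{0}^{i_{0}})^{2}+K_{0}^{i_{0}}\bigr)-(P_{0}^{i_{0}}h_{0}^{i_{0}})^{2}=P_{0}^{i_{0}}K_{0}^{i_{0}}.
\]
Finally, the constraint $\E[X_{T}^{\pi^{*}}]\geq z_{0}$ comes from restricting attention to the Pareto-efficient branch of the parabola \eqref{efficientfron}: for $z<z_{0}$ a larger $z$ strictly decreases the variance, so such mean targets are dominated by $z_{0}$ and excluded from the efficient frontier.
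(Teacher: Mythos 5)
Your proposal is correct, but it reaches the key inequality \eqref{nonnegative} by a genuinely different route than the paper. You argue softly: under \eqref{assufeas} the problem is feasible for \emph{every} $z$, and the duality bound then forces the dual quadratic $F(\cdot;z)$ to be bounded above for every $z$, which rules out $1-P^{i_0}_0(h^{i_0}_0)^2-K^{i_0}_0<0$ and, by varying $z$, also the borderline case $=0$. Note that only weak duality is needed for this: for any $\pi\in\Pi_z$ and any $\lambda$, $\mathrm{Var}(X_T^{\pi})=\E[(X_T^{\pi}-\lambda)^2]-(\lambda-z)^2\geq V(x,i_0;\lambda)-(\lambda-z)^2$, and feasible portfolios have finite variance; you should lean on this elementary inequality rather than on the strong-duality identity, both to avoid any circularity and because Luenberger's theorem is not needed in that direction. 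The paper instead computes the quantity directly: applying It\^o's formula to $P^{\alpha_t}_t(h^{\alpha_t}_t)^2+K^{\alpha_t}_t$ exhibits $1-P^{i_0}_0(h^{i_0}_0)^2-K^{i_0}_0$ as the nonnegative integral $\E\int_0^T\big(h^{\al}_t\mathscr{M}_t^{\al}(P_t)+\mathscr{R}_t^{\al}(P_t,h_t)\big)^{\top}\mathscr{N}_t^{\al}(P_t)^{-1}\big(h^{\al}_t\mathscr{M}_t^{\al}(P_t)+\mathscr{R}_t^{\al}(P_t,h_t)\big)\dt$, and excludes equality by identifying $P^{\alpha}h^{\alpha}$ with $\psi^{\alpha}$ through uniqueness for the linear equations \eqref{Ph} and \eqref{psial}, contradicting \eqref{assufeas} directly; this is self-contained and pinpoints exactly which market data would make the quantity degenerate, whereas your argument is shorter but uses the feasibility lemma (whose proof the paper omits, citing \cite{HSX}) and feasibility at a second mean level $z\neq xP^{i_0}_0h^{i_0}_0$. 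Your envelope argument certifying $\E[X_T^{\pi^*}]=z$ is a welcome addition (and is rigorous here, since $V(x,i_0;\cdot)$ is an explicit smooth quadratic and $\lambda\mapsto\E[(X_T^{\pi}-\lambda)^2]$ is differentiable for each fixed $\pi$); the paper glosses over this verification by appealing to the Lagrange duality theorem. The remaining algebra --- the maximizer \eqref{lambdastar}, the completion of squares to \eqref{efficientfron} via $P^{i_0}_0\big(P^{i_0}_0(h^{i_0}_0)^2+K^{i_0}_0\big)-(P^{i_0}_0h^{i_0}_0)^2=P^{i_0}_0K^{i_0}_0$, and discarding the dominated branch $z<z_0$ (which implicitly uses $K^{i_0}_0\geq0$, established in the remark following Theorem \ref{Th3}) --- matches the paper's computation.
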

\begin{proof}
To prove \eqref{nonnegative}, we
apply It\^o's formula to $P^{\alpha_t}_t(h^{\alpha_t}_t)^2+K^{\alpha_t}_t$, obtaining
\begin{align*} 
1-P^{i_0}_0(h^{i_0}_0)^2-K^{i_0}_0 
&=\E\int_0^T \Big(h^{\al}_t\mathscr{M}_t^{\al}(P_t)+\mathscr{R}_t^{\al}(P_t,h_t)\Big)^{\top}\medskip\\
&\qquad\qquad\;\times
\mathscr{N}_t^{\al}(P_t)^{-1}\Big(h^{\al}_t\mathscr{M}_t^{\al}(P_t)+\mathscr{R}_t^{\al}(P_t,h_t)\Big)\dt.
\end{align*}
So $1-P^{i_0}_0(h^{i_0}_0)^2-K^{i_0}_0\geq 0$, and the equality holds only if
\begin{align}\label{contradiction}
\E\int_0^T|h^{\al}_t\mathscr{M}_t^{\al}(P_t)+\mathscr{R}_t^{\al}(P_t,h_t)|\dt=0.
\end{align}
Suppose that \eqref{contradiction} holds.
Applying It\^o's formula to $P^{\alpha_t}_th^{\alpha_t}_t$,
we get
\begin{equation}\label{Ph}
\begin{cases}
\dd \ [P^{\alpha_t}_th^{\alpha_t}_t] 
=-r^{\al}_tP^{\al}_th^{\al}_t\dt -\sum_{j\neq\al}(P^{j}_th^{j}_t-P^{\al}_th^{\al}_t)\dd \tilde N^{\al,j}_t,\medskip\\
P^{\alpha_T}_Th^{\alpha_T}_T=1.
\end{cases}
\end{equation}
On the other hand, for the unique solution $\psi$ to \eqref{psi}, we also have
\begin{equation}\label{psial}
\begin{cases}
\dd \psi^{\alpha_t}_t
=-r^{\al}_t\psi^{\al}_t\dt-\sum_{j\neq\al}(\psi^{j}_t-\psi^{\al}_t)\dd \tilde N^{\al,j}_t,\medskip\\
\psi^{\alpha_T}_T=1.
\end{cases}
\end{equation}
By the uniqueness of solutions to the linear BSDEs \eqref{Ph} and \eqref{psial}, it must holds that $P^{\alpha_t}_th^{\alpha_t}_t=\psi^{\alpha_t}_t$ and $P^{j}_th^{j}_t=\psi^{j}_t, \ t\in[0,T], \ j\in\cM$. Consequently,
\begin{align*} 
h^{\al}_t\mathscr{M}_t^{\al}(P_t)+\mathscr{R}_t^{\al}(P_t,h_t) 
&=h^{\al}_tP^{\al}_t\mu^{\al}_t+\sum_{j\neq\al}q^{\al j}P^j_th^j_t\gamma^{\al,j}_t\medskip\\
&=\psi^{\al}_t\mu^{\al}_t+\sum_{j\neq\al}q^{\al j}\psi^j_t\gamma^{\al,j}_t,
\end{align*}
Then from \eqref{contradiction}, we immediately get
\begin{align*}
\E\int_0^T\Big|\psi^{\al}_t\mu^{\al}_t+\sum_{j\neq\al}q^{\al j}\psi^j_t\gamma^{\al,j}_t\Big|\dt=0,
\end{align*}
contradicting to \eqref{assufeas}. This completes the proof of \eqref{nonnegative}.

Using \eqref{value1} and \eqref{nonnegative},
a simple calculation shows
\begin{align*} 
\sup_{\lambda\in\R}[V(x,i_0;\lambda)-(\lambda -z)^{2}] 
&=\sup_{\lambda \in\R}[P_0^{i_0}(x-\lambda h_0^{i_0})^2+\lambda ^2K^{i_0}_0-(\lambda -z)^2]\medskip\\
&=V(x,i_0;\lambda^*)-(\lambda^*-z)^{2}\bigskip\\
&=\frac{P^{i_0}_0(h^{i_0}_0)^2+K^{i_0}_0}{1-P^{i_0}_0(h^{i_0}_0)^2-K^{i_0}_0}
\bigg(z-\frac{P^{i_0}_0h^{i_0}_0}{P^{i_0}_0(h^{i_0}_0)^2+K^{i_0}_0}x\bigg)^2\medskip\\
&\quad\;+\frac{P^{i_0}_0K^{i_0}_0}{P^{i_0}_0(h^{i_0}_0)^2+K^{i_0}_0}x^2,
\end{align*}
where $\lambda^*$ is given by \eqref{lambdastar}, and the maximum value is exactly the right-hand side of \eqref{efficientfron}. Finally, the optimal portfolio \eqref{efficient} is obtained by setting $\lambda=\lambda^*$ in \eqref{pistar1}.
\end{proof}

\section{MV portfolio selection with no-shorting constraint}\label{sec:noshort}
In this section, we study MV portfolio selection problem \eqref{optm} with no-shorting constraint, i.e. $\Pi=\R^m_+$.

First for any $v\in\R^m_+$ and $P_+=(P^1_+,\ldots,P^{\ell}_+)^{\top}\in\R^{\ell}_{++}, \ P_-=(P^1_-,\ldots,P^{\ell}_-)^{\top}\in\R^{\ell}_{++}$, we define the following mappings:
\begin{align*} H^i_{+,t}(v,P_+,P_-) 
:=&~P_+^i|v^{\top}\sigma^i_t|^2+2P_+^i(v^{\top}\mu^i_t-\int_{\cE}v^{\top}\beta^i_t\nu(\de))\medskip\\
&~+\sum_{k=1}^{n_2}\int_{\cE}\Big[P_+^i[(1+v^{\top}\beta_{k,t}^i)^+]^2-P_+^i+P_-^i[(1+v^{\top}\beta_{k,t}^i)^-]^2\Big]\nu_k(\de)\medskip\\
&~+\sum_{j\neq i}^{\ell}q^{ij}\Big[P_+^j [(1+v^{\top}\gamma^{i,j}_t)^+]^2-P_+^i+P_-^j[(1+v^{\top}\gamma^{i,j}_t)^-]^2\Big], \medskip\\
H^i_{-,t}(v,P_+,P_-) 
:=&~P_-^i|v^{\top}\sigma^i_t|^2-2P_-^i(v^{\top}\mu^i_t-\int_{\cE}v^{\top}\beta^i_t\nu(\de))\medskip\\
&~+\sum_{k=1}^{n_2}\int_{\cE}\Big[P_-^i[(1-v^{\top}\beta_{k,t}^i)^+]^2-P_-^i+P_+^i[(1-v^{\top}\beta_{k,t}^i)^-]^2\Big]\nu_k(\de)\medskip\\
&~+\sum_{j\neq i}^{\ell}q^{ij}\Big[P_-^j [(1-v^{\top}\gamma^{i,j}_t)^+]^2-P_-^i+P_+^j[(1-v^{\top}\gamma^{i,j}_t)^-]^2\Big],
\end{align*}
and
\begin{align}
H^{*,i}_{+,t}(P_+,P_-)&:=\inf_{v\in\R^m_+}H^i_{+,t}(v,P_+,P_-),\label{defH1}\medskip\\
H^{*,i}_{-,t}(P_+,P_-)&:=\inf_{v\in\R^m_+}H^i_{-,t}(v,P_+,P_-).\label{defH2}
\end{align}

\begin{remark}\label{Hfinite}
Under Assumption \ref{assu1}, both $H^i_{+,t}(P_+,P_-)$ and $H^i_{-,t}(P_+,P_-)$ take finite values for any $P_+, P_-\in\R^{\ell}_{++}$. In fact, for any $(P_{+},P_{-})\in \R^{\ell}_{++}\times \R^{\ell}_{++}$, there are constants $c_1(P_{+},P_{-})>0$ and $c_2(P_{+},P_{-})>0$ such that
\begin{equation*}
H^i_{\pm}(v,P_{+},P_{-})-H^i_{\pm}(0,P_{+},P_{-})\geq c_1|v|^2-c_2(|v|+1).
\end{equation*}
Hence $H^i_{\pm}(v,P_{+},P_{-})> H^i_{\pm}(0,P_{+},P_{-})\geq H_{\pm}^{*,i}(P_{+},P_{-})$ whenever $|v|$ is sufficiently large in terms of $c_1$ and $c_2$.
Therefore, we have
\begin{equation}\label{bound}
H^{*,i}_{\pm,t}(P_+,P_-)=\inf_{v\in\R^m_+,|v|\leq c_3(P_+,P_-)}H^i_{\pm,t}(v,P_+,P_-),
\end{equation}
where $c_3(P_+,P_-)>0$. It follows that
$H_{\pm}^*(P_{+},P_{-})$ are finite and locally Lipschitz w.r.t. $(P_{+},P_{-})$.
\end{remark}

Now we introduce the following Riccati equation, which is a nonlinear $2\ell$-dimensional fully coupled ODE system:
\begin{align}
\label{Riccati}
\begin{cases}
\dot{P^i_{+,t}}
=-\Big[2r^i_tP^i_{+,t}+H^{*,i}_{+,t}(P_{+,t},P_{-,t})\Big]\medskip\\
\dot{P^i_{-,t}}
=-\Big[2r^i_tP^i_{-,t}+H^{*,i}_{-,t}(P_{+,t},P_{-,t})\Big]\medskip\\
P^i_{+,T}=P^i_{-,T}=1, \medskip\\
P^i_{+,t}>0, \ P^i_{-,t}>0, \ t\in[0,T], \ i\in\cM.
\end{cases}
\end{align}
The solution $(P_+,P_-)=(P^1_+,\ldots,P^{\ell}_+,P^1_-,\ldots,P^{\ell}_-)$ of \eqref{Riccati} is required to be positive in the last constraints, hence $H^{*,i}_{+,t}$ and $H^{*,i}_{-,t}$ are well defined as delineated in Remark \ref{Hfinite}.

\begin{thm}\label{existence}
Let Assumption \ref{assu1} hold, and let $H_{+}^*$ and $H_{-}^*$ be defined by \eqref{defH1} and \eqref{defH2}. Then the Riccati equation \eqref{Riccati} admits a unique positive solution $(P_{+},P_{-})$. Furthermore,
\begin{equation*}%\label{hatv}
\hat v^i_{\pm,t}(P_{+,t},P_{-,t}):=\mathop{\mathrm{argmin}}\limits_{v\in\R^m_+}H^i_{\pm,t}(v,P_{+,t},P_{-,t})
\end{equation*}
are bounded functions on $[0,T]$ for all $i\in\cM$.
\end{thm}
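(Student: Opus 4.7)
The plan is to establish existence and uniqueness through the standard Riccati strategy: obtain a local backwards solution via Picard-Lindelöf, extend it to all of $[0, T]$ by proving uniform a priori upper and strictly positive lower bounds, and read off uniqueness from local Lipschitz continuity once $(P_+, P_-)$ is confined to a fixed compact subset of $\R^\ell_{++} \times \R^\ell_{++}$. The boundedness claim on $\hat v^i_\pm$ will then follow directly from Remark \ref{Hfinite} applied to the bounded trajectory.

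For local existence, Remark \ref{Hfinite} guarantees that $H^{*,i}_{\pm,t}(P_+, P_-)$ is finite and locally Lipschitz in $(P_+, P_-) \in \R^\ell_{++} \times \R^\ell_{++}$, uniformly in $t \in [0, T]$ by boundedness of the coefficients. Picard-Lindelöf applied backwards from the terminal data $P^i_\pm(T) = 1$ yields a unique maximal solution on some left-maximal interval $(\tau, T] \subseteq [0, T]$ with $P^i_\pm > 0$ throughout. For the uniform upper bound, I would substitute $v = 0$ in the definitions of $H^i_{\pm,t}$: the $v$-quadratic and $v$-linear terms vanish and $(1+0)^+ = 1$, $(1+0)^- = 0$, yielding
\begin{align*}
H^{*,i}_{\pm,t}(P_+, P_-) \leq H^i_{\pm,t}(0, P_+, P_-) = \sum_{j \neq i} q^{ij}(P^j_\pm - P^i_\pm).
\end{align*}
Hence $(P_+, P_-)$ is a supersolution of the linear coupled system $\dot Q^i_t = -2r^i_t Q^i_t - \sum_{j \neq i} q^{ij}(Q^j_t - Q^i_t)$ with $Q^i_T = 1$, which admits the Feynman-Kac representation $Q^i_t = \E[\exp(-2\int_t^T r^{\alpha_s}_s \ds) \mid \alpha_t = i]$ and is uniformly bounded on $[0, T]$. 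Since $q^{ij} \geq 0$ for $j \neq i$ provides M-matrix-type monotonicity, a standard comparison argument for such generator-coupled linear systems yields $P^i_\pm(t) \leq Q^i_t$ on $(\tau, T]$.

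The main obstacle is a strictly positive uniform lower bound on $(P_+, P_-)$, because the coercivity constants underlying the bound on $|\hat v|$ in Remark \ref{Hfinite} degenerate as $\min_i \min(P^i_+, P^i_-) \to 0$, so $\hat v$ cannot be controlled a priori uniformly in the solution. I would proceed by collecting the terms of $H^i_\pm(v, P_+, P_-)$ according to the $P$-variable they multiply:
\begin{align*}
H^i_{+,t}(v, P_+, P_-) = P^i_+ A^i_t(v) + P^i_- B^i_{1,t}(v) + \sum_{j \neq i}\bigl(P^j_+ B^i_{2,j,t}(v) + P^j_- B^i_{3,j,t}(v)\bigr),
\end{align*}
where $B^i_1, B^i_{2,j}, B^i_{3,j} \geq 0$ are sums of jump-squared contributions. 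Invoking Assumption \ref{assu1} together with a bootstrap using the established upper bound, one shows $A^i_t(\hat v) \geq -C$ with $C$ depending only on that upper bound, so dropping the non-negative $B$-terms gives $H^{*,i}_{+,t} \geq -C P^i_+$. The Riccati ODE then implies $\dot P^i_+ \leq C' P^i_+$, and backward integration from $T$ yields $P^i_+(t) \geq e^{-C'(T-t)} \geq e^{-C'T} > 0$. The analogous argument handles $P^i_-$.

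With the uniform upper and strictly positive lower bounds in hand, $(P_+, P_-)$ remains in a compact subset of $\R^\ell_{++} \times \R^\ell_{++}$ on $(\tau, T]$, so $\tau = 0$ and the local solution extends uniquely to $[0, T]$. Uniqueness on $[0, T]$ follows from the local Lipschitz property of the right-hand side on this compact region. Finally, Remark \ref{Hfinite} applied to the now uniformly bounded trajectory $(P_+(t), P_-(t))$ supplies a uniform constant $c_3$ with $|\hat v^i_{\pm,t}| \leq c_3$ for every $t \in [0, T]$ and $i \in \cM$, completing the boundedness claim.
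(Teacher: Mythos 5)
Your overall architecture (local existence by Picard--Lindel\"of, a priori bounds, extension, uniqueness from local Lipschitzness on a compact set) is a legitimate alternative to the paper's truncation scheme, and your upper bound via $H^{*,i}_{\pm,t}(P_+,P_-)\leq H^i_{\pm,t}(0,P_+,P_-)=\sum_{j\neq i}q^{ij}(P^j_\pm-P^i_\pm)$ followed by comparison is essentially the paper's Step 2 (modulo a sign slip in your Feynman--Kac exponent, which should be $+2\int_t^T r$). The genuine gap is in the lower bound. Your plan is to write $H^i_{+,t}(v,P_+,P_-)=P^i_+A^i_t(v)+(\text{nonnegative }B\text{-terms})$, drop the $B$-terms, and claim $A^i_t(\hat v)\geq -C$ with $C$ depending only on the upper bound. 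This fails for two reasons. First, $A^i_t(v)$ is \emph{not} bounded below uniformly in $v\in\R^m_+$: the coercivity guaranteed by Assumption \ref{assu1} is spread across $\sigma\sigma^{\top}$, the $\beta$-terms and the $\gamma$-terms, and the quadratic growth coming from the jump terms is capped at $-1$ once $1+v^{\top}\beta^i_{k,t}<0$ (it migrates into $[(1+v^{\top}\beta^i_{k,t})^-]^2$, i.e.\ precisely into the $B$-terms you discard). So if $\mu^i$ has negative components and $\sigma^i$ is degenerate in the relevant direction, $A^i_t(v)\to-\infty$ along a ray in $\R^m_+$. Second, evaluating at $\hat v$ does not rescue this: controlling $|\hat v|$ requires a strictly positive lower bound on $(P_+,P_-)$ (you say so yourself at the start of that paragraph), which is exactly what you are trying to prove --- the upper bound alone gives no coercivity. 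Concretely, with $m=\ell=1$, $\sigma=0$, one jump with $\beta=-1/2$, $\nu(\cE)=1$, $\mu=-10$, one computes $H^{*}_{+}\approx -38P_+-361P_+^2/P_-$, which is not bounded below by $-CP_+$ as $P_-\downarrow 0$; no decoupled differential inequality $\dot P^i_+\leq C'P^i_+$ is available.

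The lower bound genuinely requires a \emph{joint} comparison for $P_+$ and $P_-$. The paper's Step 3 does this by comparing with the regime-independent subsolution $\underline P_t=e^{-c_2(T-t)}$, where $c_2$ comes from \eqref{geq}. The two structural facts that make this work, and that your argument is missing, are: (i) when all components of $P_+$ and $P_-$ are set equal to $\underline P_t$, the identity $[(1+x)^+]^2+[(1+x)^-]^2=(1+x)^2$ collapses $H^i_{+,t}(v,\underline{\mathbf P}_t,\underline{\mathbf P}_t)$ into the genuine quadratic $\underline P_t\big(v^{\top}\Sigma^i_tv-2v^{\top}(\mu^i_t+\sum_{j\neq i}q^{ij}\gamma^{i,j}_t)\big)$ whose infimum over all of $\R^m$ (hence a fortiori a lower bound for the infimum over $\R^m_+$) is $-\underline P_t(\cdots)^{\top}(\Sigma^i_t)^{-1}(\cdots)\geq -c_2\underline P_t+2r^i_t \underline P_t\cdot(-1)\cdot 0$, i.e.\ exactly the inequality $-c_2\underline P_t\leq 2r^i_t\underline P_t+H^{*,i}_{+,t}(\underline{\mathbf P}_t,\underline{\mathbf P}_t)$; and (ii) the monotonicity of $H^i_{+,t}(v,\cdot,\cdot)$ in each $P^j_+$ ($j\neq i$) and in every component of $P_-$, which allows a Gronwall estimate on the \emph{sum} $\sum_i\big([(\underline P_t-P^i_{+,t})^+]^2+[(\underline P_t-P^i_{-,t})^+]^2\big)$ so that the lower bounds for all $2\ell$ components are established simultaneously. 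Without replacing your decoupled estimate by an argument of this type, the extension of the local solution to all of $[0,T]$, and hence existence, is not established.
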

\begin{remark}\label{special}
If the infimums in \eqref{defH1} and \eqref{defH2} are taken over $v\in\R^m$, then we find that
$H^{*,i}_{+,t}(P_+,P_-)=H^{*,i}_{-,t}(P_+,P_-)$, hence $P_+=P_-$. And
\begin{align*}
&~\inf_{v\in\R^m}\Big[P^{i}_t|v^{\top}\sigma^{i}_t|^2
+P^i_t\sum_{j=1}^{n_2}\int_{\cE}|v^{\top}\beta^i_{j,t}(e)|^2\nu_{j}(\de) +2P^{i}_tv^{\top}\mu^{i}_t+\sum_{j\neq i}^{\ell}q^{ij}\Big(P^{j}_t(1+v^{\top}\gamma^{i,j}_t)^2-P^{i}_t\Big)\Big]\medskip\\
=&~ -\mathscr{M}_t^i(P)^{\top}(\mathscr{N}_t^i(P))^{-1}\mathscr{M}_t^i(P) +\sum\limits_{j\neq i}q^{ij}(P^j_t-P^i_t),
\end{align*}
so \eqref{Riccati} degenerates to \eqref{P}.
\end{remark}

\begin{proof}
\textbf{Step 1: Truncation.}
Assumption \ref{assu1} implies there are positive constants $c_1$ and $c_2$ such that for any $ t\in[0,T], \ i\in\cM$,
\begin{equation}\label{leq}
r^i_t\leq c_1,
\end{equation}
and
\begin{equation}\label{geq}
2r^i_t-(\mu^i_t+\sum_{j\neq i}^{\ell}q^{ij}\gamma^{ij})^{\top}(\Sigma^i_t)^{-1}(\mu^i_t+\sum_{j\neq i}^{\ell}q^{ij}\gamma^{ij})\geq -c_2.
\end{equation}

We first introduce some notations. Let $\kappa:=e^{2c_1T}$ and $\epsilon:=e^{-c_2T}$.
For $P=(P^1,...,P^{\ell})^{\top}\in\R^{\ell}$, let $g(P)=(\epsilon\vee(P^1\wedge \kappa),...,\epsilon\vee(P^{\ell}\wedge \kappa))\in\R^{\ell}$. Then $H_{\pm,t}^{i,*}(g(P_{+}),g(P_{-}))$ are Lipschitz continuous w.r.t. $(P_{+},P_{-})$ on $\R^{\ell}\times\R^{\ell}$, so the $2\ell$-dimensional ODE
\begin{align}\label{Riccatitrun}
\begin{cases}
\dot{\tilde P}^i_{+,t}=-\Big[2r^i_t\tilde P^i_{+,t}+H_{+,t}^{*,i}(g(\tilde P_{+,t}),g(\tilde P_{-,t}))\Big],\medskip\\
\dot{\tilde P}^i_{-,t}=-\Big[2r^i_t\tilde P^i_{-,t}+H_{-,t}^{*,i}(g(\tilde P_{+,t}),g(\tilde P_{-,t}))\Big],\medskip\\
\tilde P^i_{\pm,T}=1,
\end{cases}
\end{align}
admits a unique classical solution, denoted by $(\tilde P_{+},\tilde P_{-})$. In the next two steps, we will respectively show
\begin{align*}%\label{upper}
\tilde P^i_{\pm,t}\leq \kappa, \ t\in[0,T], \ i\in\cM,
\end{align*}
and
\begin{align}\label{lower}
\tilde P^i_{\pm,t}\geq \epsilon, \ t\in[0,T], \ i\in\cM.
\end{align}
Then $(\tilde P_{+},\tilde P_{-})$ is actually a positive solution to the ODE \eqref{Riccati} since $g(\tilde P_{\pm,t})=\tilde P_{\pm,t}$ in \eqref{Riccatitrun}.

\textbf{Step 2: Upper bound.}
Notice that $\overline P_t:=e^{2c_1(T-t)}$ is the unique solution to the following linear ODE
\begin{align*}
\begin{cases}
\dot P_t=-2c_1P_t,\medskip\\
P_T=1.
\end{cases}
\end{align*}
We shall use $c$ to represent a generic positive constant which can be different from line to line.
According to the chain rule, we have
\begin{align*} 
[(\tilde P^i_{+,t}-\overline P_t)^+]^2 
&=\int_t^T 2(\tilde P^i_{+,s}-\overline P_s)^+\big[2r^i_s(\tilde P^i_{+,s}-\overline P_s)+2(r^i_s-c_1)\overline P_s +H^{*,i}_{+,s}(\tilde P_{+,s},\tilde P_{-,s})\big]\ds\medskip\\
&\leq c\int_t^T (\tilde P^i_{+,s}-\overline P_s)^+\big[|\tilde P^i_{+,s}-\overline P_s|+H^{i}_{+,s}(0,\tilde P_{+,s},\tilde P_{-,s})\big] \ds\medskip\\
&= c\int_t^T (\tilde P^i_{+,s}-\overline P_s)^+\big[|\tilde P^i_{+,s}-\overline P_s|+\sum_{j\neq i}^{\ell}q^{ij}(\tilde P^j_{+,s}-\tilde P^i_{+,s}) -\sum_{j\neq i}^{\ell}q^{ij}(\overline P_{s}-\overline P_{s})\big]\ds\medskip\\
&\leq c\sum_{j=1}^{\ell}\int_t^T [(\tilde P^j_{+,s}-\overline P_s)^+]^2\ds,
\end{align*}
where we used \eqref{leq} in the first inequality, and the fact $q^{ij}\geq 0$ for $i\neq j$ in the second inequality. Taking sums in terms of $i$ from $1$ to $\ell$ on both sides of the above inequality, we get
\begin{equation*}
\sum_{i=1}^{\ell}[(\tilde P^i_{+,t}-\overline P_t)^+]^2
\leq c\ell\sum_{i=1}^{\ell}\int_t^T [(\tilde P^i_{+,s}-\overline P_s)^+]^2\ds.
\end{equation*}
Gronwall's inequality implies $\sum_{i=1}^{\ell}[(\tilde P^i_{+,t}-\overline P_t)^+]^2=0$, so 
\begin{equation*}
\tilde P^i_{+,t}\leq \overline P_t= e^{2c_1(T-t)}\leq \kappa.
\end{equation*}
The proof for the upper bound of $\tilde P^i_{-,t}$ is parallel, so we omit it.

\textbf{Step 3: Lower bound.}
To prove the estimate \eqref{lower},
we notice $\underline P_t=\exp(-c_2(T-t))$, $t\in[0,T]$ satisfies the following ODE
\begin{align*}
\begin{cases}
\dot{P}_t=c_2 P_{t},\medskip\\
P_T=1.
\end{cases}
\end{align*}
Clearly, $\epsilon\leq\underline P_t\leq \kappa$, so we have, by \eqref{geq} and the definitions of $H^i_{+}$ and $H^{*,i}_{+}$,
\begin{align*} 
-c_2\underline P_t 
&\leq [2r^i_t
-(\mu^i_t+\sum_{j\neq i}^{\ell}q^{ij}\gamma^{ij})^{\top}(\Sigma^i_t)^{-1}(\mu^i_t+\sum_{j\neq i}^{\ell}q^{ij}\gamma^{ij})]\underline P_{t}\medskip\\
&=2r^i_t\underline P_{t}+\inf_{v\in\R^m}H^i_{+,t}(v,\underline{\mathbf{P}}_t,\underline{\mathbf{P}}_t)\medskip\\
&\leq 2r^i_t\underline P_{t}+ \inf_{v\in\R^m_+}H^i_{+,t}(v,\underline{\mathbf{P}}_t,\underline{\mathbf{P}}_t)\medskip\\
&=2r^i_t\underline P_{t}+H_{+,t}^{*,i}(g(\underline{\mathbf{P}}_t),g(\underline{\mathbf{P}}_t)).
\end{align*}
where $\underline{\mathbf{P}}_t$ stands for the $\ell$-dimensional column vector whose components are all $\underline P_t$.
It follows from the chain rule that
\begin{align*}\label{chain}
[(\underline P_t-\tilde P^i_{+,t})^+]^2&=\int_t^T 2(\underline P_s-\tilde P^i_{+,s})^+\Big[-c_2\underline P_s-2r^i_t\tilde P_{+,s}-H_{+,s}^{*,i}(g(\tilde P_{+,s}),g(\tilde P_{-,s}))\Big]\ds\nn\medskip\\
&\leq \int_t^T 2(\underline P_s-\tilde P^i_{+,s})^+\Big[2r^i_s(\underline P_s-\tilde P^i_{+,s})\medskip\\
&\quad\;+H_{+,s}^{*,i}(g(\underline{\mathbf{P}}_t),g(\underline{\mathbf{P}}_t))
-H_{+,s}^{*,i}(g(\tilde P_{+,s}),g(\tilde P_{-,s}))\Big]\ds.
\end{align*}
On the other hand, from \eqref{bound}, we know
\begin{equation*}\label{bounddomain}
H_{+,t}^{*,i}(g(\tilde P_{+,t}),g(\tilde P_{-,t}))=\inf_{v\in\R^m_+,|v|\leq c}H^i_{+,t}(v,g(\tilde P_{+,t}),g(\tilde P_{-,t})).
\end{equation*}
Observe that $H^i_{+,t}(v,g(P_{+}),g(P_{-}))$ is non-decreasing w.r.t. $P^j_{+}$ for $j\neq i$ and every component $P^i_{-}$ of $P_{-}$, $i\in\cM$, we get
\begin{align*}
[(\underline P_t-\tilde P^i_{+,t})^+]^2
&\leq \int_t^T 2(\underline P_s-\tilde P^i_{+,s})^+\Big[2r^i_s(\underline P_s-\tilde P^i_{+,s})\medskip\\
&\qquad+\sup_{v\in\R^m_+,|v|\leq c}\Big(H^i_{+,s}(v,g(\underline{\mathbf{P}}_t),g(\underline{\mathbf{P}}_t)) -H^i_{+,s}(v,g(\tilde P_{+,s}),g(\tilde P_{-,s}))\Big)\Big]\ds\medskip\\
&\leq c\int_t^T (\underline P_s-\tilde P^i_{+,s})^+\Big[|\underline P_s-\tilde P^i_{+,s}| +\sum_{j\neq i}^{}(\underline P_s-\tilde P^j_{+,s})^++\sum_{j=1}^{\ell}(\underline P_s-\tilde P^j_{-,s})^+\Big]\ds\nn\medskip\\
&\leq c\sum_{i=1}^{\ell}\int_t^T\Big[[(\underline P_s-\tilde P^i_{+,s})^+]^2+[(\underline P_s-\tilde P^i_{-,s})^+]^2\Big]\ds.
\end{align*}
Similarly, we have
\begin{equation*}
[(\underline P_t-\tilde P^i_{-,t})^+]^2
\leq c\sum_{i=1}^{\ell}\int_t^T[(\underline P_s-\tilde P^i_{+,s})^+]^2+[(\underline P_s-\tilde P^i_{-,s})^+]^2\ds.
\end{equation*}
Combining the above two inequalities and taking sums in terms of $i$ from $1$ to $\ell$ on both sides yield
\begin{align*}
\sum_{i=1}^{\ell}\Big[[(\underline P_t-\tilde P^i_{+,t})^+]^2+[(\underline P_t-\tilde P^i_{-,t})^+]^2\Big] &\leq 2c\ell\sum_{i=1}^{\ell}\int_t^T[(\underline P_s-\tilde P^i_{+,s})^+]^2+[(\underline P_s-\tilde P^i_{-,s})^+]^2\ds.
\end{align*}
We infer from Gronwall's inequality that $[(\underline P_t-\tilde P^i_{+,t})^+]^2+[(\underline P_t-\tilde P^i_{-,t})^+]^2=0$, which implies $\tilde P^i_{\pm,t}\geq \underline P_t=e^{-c_2(T-t)}\geq \epsilon$, establishing \eqref{lower}.

\textbf{Step 4: Uniqueness.}

Suppose \eqref{Riccati} admits two solutions $(P_{+}, P_{-})$ and $(P'_{+}, P'_{-})$ which are continuous obviously. Noting the last constraints in \eqref{Riccati}, we conclude that there exists two positive constants $\kappa'\geq\epsilon'$, such that
\[
\epsilon'\leq P^{i}_{\pm,t}\leq \kappa', \ \epsilon'\leq P^{'i}_{\pm,t}\leq \kappa', \ t\in[0,T], \ i\in\cM.
\]

Replacing $\kappa$ and $\epsilon $ in the truncation function $g$ by $\kappa\vee\kappa'$ and $\epsilon\wedge\epsilon'$ in \eqref{Riccatitrun}, then both $(P_{+}, P_{-})$ and $(P'_{+}, P'_{-})$ are solutions to the ODE \eqref{Riccatitrun} with Lipschitz driver, so they must be equal.

\textbf{Step 5.}
Finally, the existence and boundedness of $\hat v_{\pm,t}(P_{+,t},P_{-,t})$ follow from \eqref{bound} and
the boundedness of $P_{\pm,t}$.
\end{proof}

To solve the MV problem \eqref{optm} or LQ problem \eqref{optmun} under no-shorting constraints, hereafter we have to impose the following assumption.

\begin{assmp}\label{assu2}
The interest rate $r_t$ is a deterministic, Borel-measurable and bounded function on $[0,T]$.
\end{assmp}
Under Assumption \ref{assu2}, the interest rate $r_t$ is independent of the Markov chain $\alpha$, and the unique solutions to the $\ell$-dimensional linear ODEs \eqref{psi} and \eqref{h} are given by
\[
\psi^i_t=e^{\int_t^Tr_s\ds}, \ h^i_t=e^{-\int_t^Tr_s\ds}, \ t\in[0,T], \ i\in\cM.
\]
Since they do not depend on $i\in\cM$, we henceforth write $\psi_t$ and $h_t$ respectively.
And the feasible condition \eqref{assufeas2} is equivalent to
\begin{align}\label{assufeas3}
\sum_{k=1}^{\ell}\E\int_0^T\Big(\mu^{\al}_{k,t}+\sum_{j\neq\al}q^{\al j}\gamma^{\al,j}_{k,t}\Big)^+\dt>0.
\end{align}

The following result may be proved in much the same way as \cite[Proposition 4.1]{SX}. Details are left to the interested readers.
\begin{thm}\label{verifi}
Suppose Assumptions \ref{assu1} and \ref{assu2} hold. Also suppose the feasible condition \eqref{assufeas2} (or equivalently $\eqref{assufeas3}$) holds.
Let $(P_{+},P_{-})$, $\hat v_{\pm,t}(P_{+,t},P_{-,t})$ be given in Theorem \ref{existence}.
Then the LQ problem \eqref{optmun} admits an optimal feedback control as follows
\begin{align}\label{pistar2}
\pi^*(t,X)&=\hat v^{\al}_{+,t}(P_{+,t},P_{-,t})(X_{t-}-\lambda h_t)^+ +\hat v^{\al}_{-,t}(P_{+,t},P_{-,t})(X_{t-}-\lambda h_t)^-.
\end{align}
Moreover, the corresponding optimal value is
\begin{align*}\label{value2}
V(x,i_0;\lambda) 
=P^{i_0}_{+,0}\big[(x-\lambda h_0)^+\big]^2+P^{i_0}_{-,0}\big[(x-\lambda h_0)^-\big]^2.
\end{align*}
\end{thm}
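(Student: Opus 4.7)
The plan is to mimic the completion-of-squares argument used in \citethm{Th3}, this time with the nonsmooth test function
\begin{align*}
\Theta^i(t, y) := P^i_{+,t}(y^+)^2 + P^i_{-,t}(y^-)^2, \quad i\in\cM,
\end{align*}
which is tailored to the one-sided constraint $\pi\in\R^m_+$. First I would introduce the shifted state $Y_t := X_t - \lambda h_t$. Under \citeassmp{assu2}, $h_t = e^{-\int_t^T r_s\ds}$ is deterministic with $\dot h_t = r_t h_t$, so \eqref{wealth} yields
\begin{align*}
\dd Y_t = r_t Y_{t-}\dt + \pi_t^{\top}\mu^{\al}_t\dt + \pi_t^{\top}\sigma^{\al}_t\dw_t + \int_{\cE}\pi_t^{\top}\beta^{\al}_t(e)\tilde\Phi(\dt,\de) + \pi_t^{\top}\gamma^{\al,\alpha_t}_t\dn^{\al,\alpha_t}_t,
\end{align*}
with $Y_0 = x - \lambda h_0$. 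Since $y\mapsto(y^\pm)^2$ is only $C^1$ at the origin, I would justify applying the generalized It\^o formula (stated just before \citethm{Th:P}) to $\Theta^{\alpha_t}(t, Y_t)$ via a standard smooth mollification, together with the fact that $\{t:Y_t=0\}$ has zero Lebesgue measure almost surely.

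The crucial and most delicate step is to show that when a feedback $\pi_t = v Y_{t-}^+ + w Y_{t-}^-$ with $v,w\in\R^m_+$ is substituted in (note that on $\{Y_{t-}>0\}$ only $v$ is active, while on $\{Y_{t-}<0\}$ only $w$ is), the $\dt$-drift of $\Theta^{\alpha_t}(t, Y_t)$ reorganizes exactly as
\begin{align*}
\big[\dot P^{\al}_{+,t} + 2r_t P^{\al}_{+,t} + H^{\al}_{+,t}(v, P_{+,t}, P_{-,t})\big](Y_{t-}^+)^2 + \big[\dot P^{\al}_{-,t} + 2r_t P^{\al}_{-,t} + H^{\al}_{-,t}(w, P_{+,t}, P_{-,t})\big](Y_{t-}^-)^2.
\end{align*}
The key algebraic identity driving this is that the jump pieces $\int[\Theta^{\al}(t,Y_{t-}+\pi^{\top}\beta^{\al}_{k,t}) - \Theta^{\al}(t,Y_{t-})]\nu_k(\de)$ and $\sum_{j\neq\al}q^{\al j}[\Theta^j(t,Y_{t-}+\pi^{\top}\gamma^{\al,j}_t) - \Theta^{\al}(t,Y_{t-})]$ collapse, after a careful case split on the sign of $Y_{t-} + \pi^{\top}\beta^{\al}_k$ (respectively $Y_{t-} + \pi^{\top}\gamma^{\al,j}$), into precisely the $(1+v^{\top}\beta)^{\pm}$ and $(1-w^{\top}\beta)^{\pm}$ (and analogous $\gamma$) expressions that define $H^i_{+,t}$ and $H^i_{-,t}$. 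Invoking the Riccati system \eqref{Riccati} to replace $\dot P^i_{\pm,t} + 2r_t P^i_{\pm,t}$ by $-H^{*,i}_{\pm,t}(P_{+,t}, P_{-,t})$, the drift reduces to
\begin{align*}
[H^{\al}_{+,t}(v,\cdot) - H^{*,\al}_{+,t}(\cdot)](Y_{t-}^+)^2 + [H^{\al}_{-,t}(w,\cdot) - H^{*,\al}_{-,t}(\cdot)](Y_{t-}^-)^2 \geq 0,
\end{align*}
with equality precisely when $v = \hat v^{\al}_{+,t}$ on $\{Y_{t-}>0\}$ and $w = \hat v^{\al}_{-,t}$ on $\{Y_{t-}<0\}$; that is, when $\pi_t$ coincides with the feedback \eqref{pistar2}.

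The remaining steps are routine. Taking expectations — the martingale parts are genuine martingales thanks to the boundedness of $P_{\pm}$ and $\hat v_{\pm}$ from \citethm{existence} together with the $L^2$-integrability built into $\ou$ — and using the terminal conditions $P^i_{\pm,T}=1$ and the identity $(X_T-\lambda)^2 = [(X_T-\lambda)^+]^2 + [(X_T-\lambda)^-]^2$, one obtains
\begin{align*}
\E[(X_T-\lambda)^2] = P^{i_0}_{+,0}[(x-\lambda h_0)^+]^2 + P^{i_0}_{-,0}[(x-\lambda h_0)^-]^2 + \E\int_0^T(\text{nonnegative integrand})\,\dt,
\end{align*}
which gives both the claimed value function and the optimality of $\pi^*$ in \eqref{pistar2}. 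Admissibility of $\pi^*$ — predictability, $\R^m_+$-valuedness (since $\hat v^{\al}_{\pm,t}\in\R^m_+$ by construction), and square integrability — follows from the boundedness of $\hat v_{\pm}$ and standard strong-existence/$L^2$-estimates for the piecewise-linear SDE $\dd X_t = [\text{Lipschitz in }X_{t-}]$ that governs $X^{\pi^*}$, using that $x\mapsto x^\pm$ is $1$-Lipschitz. The main obstacle will therefore be the careful bookkeeping of jump-induced sign changes needed to derive the algebraic identity for the drift, rather than any novel analytic estimate.
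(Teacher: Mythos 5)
The paper gives no self-contained proof of this theorem (it defers to the viscosity-solution verification of \cite[Proposition 4.1]{SX}), so your direct completion-of-squares argument with the piecewise quadratic function $\Theta^i(t,y)=P^i_{+,t}(y^+)^2+P^i_{-,t}(y^-)^2$ is a genuinely different route, and its core algebra --- the collapse of the Poisson and regime-switching jump terms into the $(1+v^{\top}\beta)^{\pm}$, $(1-w^{\top}\beta)^{\pm}$ (and $\gamma$) expressions defining $H^i_{\pm,t}$, followed by substitution of \eqref{Riccati} --- is correct. Two steps, however, are flawed as written. First, your justification of It\^o's formula for the $C^{1}$ but non-$C^{2}$ function $\Theta$ rests on the claim that $\{t:Y_t=0\}$ has zero Lebesgue measure a.s.; this is false in general: for the admissible control $\pi\equiv 0$ with $x=\lambda h_0$ one has $Y\equiv 0$, and under your own candidate optimum the process $Y$ is absorbed at $0$ once it reaches it. The mollification can still be pushed through, but for a different reason: the only place where $\Theta_{yy}$ is ambiguous is $\{Y_t=0\}$, and that set is not charged by $\dd\langle Y^{c}\rangle_t=|\pi_t^{\top}\sigma_t^{\al}|^2\dt$ (occupation-time formula; equivalently, the distributional second derivative $2\id{y>0}\dy$ of $(y^+)^2$ has no atom at $0$, so no local-time term appears in the Meyer--It\^o formula). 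You need to argue along these lines, not via the Lebesgue measure of $\{Y_t=0\}$.

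Second, and more importantly, your drift inequality is derived only for controls of the proportional feedback form $\pi_t=vY_{t-}^{+}+wY_{t-}^{-}$ with $v,w\in\R^m_+$; as it stands it shows that \eqref{pistar2} beats controls of that special form, not that it is optimal over all of $\mathcal{U}$, which is what the theorem asserts and what the Lagrange duality step in Section \ref{sec:pf} requires. The repair is short but must be made explicit: for an arbitrary admissible $\pi$ set $v_t:=\pi_t/Y_{t-}$ on $\{Y_{t-}>0\}$ and $w_t:=\pi_t/Y_{t-}^{-}$ on $\{Y_{t-}<0\}$ (both $\R^m_+$-valued and predictable), so that your identity gives a pointwise drift bound from below by $0$ via $H^{\al}_{\pm,t}\geq H^{*,\al}_{\pm,t}$; and on $\{Y_{t-}=0\}$, where no such normalization exists, observe that $\Theta^{\al}(t,0)=\Theta^{\al}_y(t,0)=0$, so the drift reduces to $\sum_{k}\int_{\cE}\Theta^{\al}(t,\pi_t^{\top}\beta^{\al}_{k,t}(e))\nu_k(\de)+\sum_{j\neq\al}q^{\al j}\Theta^{j}(t,\pi_t^{\top}\gamma^{\al,j}_t)\geq 0$, with equality under $\pi^*$ since $\pi^*_t=0$ there. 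With these two repairs (plus the localization you already indicate for the martingale terms), your verification goes through and yields the stated optimal feedback and value function.
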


Under Assumption \ref{assu2}, one can verify that $\pi=0$ is an optimal portfolio for the MV problem \eqref{optm} when $z=xh_0^{-1}$. The case $z<xh_0^{-1}$ is financially meaningless as any rational investor would expect a better return than doing nothing, hence, we will mainly focus on the nontrivial case $z>xh_0^{-1}$ in this case.

Next we proceed to derive the optimal portfolio and efficient frontier for the original MV problem \eqref{optm}.
\begin{thm}\label{Th:efficient2}
Under the conditions of Theorem \ref{verifi} and suppose the feasible condition \eqref{assufeas2} (or equivalently $\eqref{assufeas3}$) {holds}.
Then we have
\begin{align}\label{nonnegative2}
h_0^2P^{i_0}_{-,0}<1.
\end{align}
Moreover, the state feedback control defined by
\begin{align}\label{efficient2}
\pi^*(t,X)&=\hat v^{\al}_{+,t}(P_{+,t},P_{-,t}) (X_{t-}-\lambda^*h_t)^+ +\hat v^{\al}_{-,t}(P_{+,t},P_{-,t})(X_{t-}-\lambda^*h_t)^-,
\end{align}
is optimal for the MV problem \eqref{optm}, where
\begin{align}\label{lambdastar2}
\lambda^*=\frac{z-xh_0P^{i_0}_{-,0}}{1-h_0^2P^{i_0}_{-,0}}.
\end{align}
Furthermore, the efficient frontier to the problem \eqref{optm} is a half-line, determined by
\begin{align}
\label{efficientfron2}
\mathrm{Var}(X^{\pi^{*}}_T)=\frac{P^{i_0}_{-,0}}{1-h^2_0 P^{i_0}_{-,0}}\Big(h_0\E[X_T^{\pi^{*}}]-x\Big)^2,
\end{align}
where $\E[X_T^{\pi^{*}}]\geq xh_0^{-1}$.
\end{thm}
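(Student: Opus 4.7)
My strategy is to follow the blueprint of Theorem \ref{Th:efficient}: combine the LQ value from Theorem \ref{verifi}, namely $V(x,i_0;\lambda)=P^{i_0}_{+,0}[(x-\lambda h_0)^+]^2+P^{i_0}_{-,0}[(x-\lambda h_0)^-]^2$, with the Lagrange duality identity $\inf_{\pi\in\Pi_z}\mathrm{Var}(X_T)=\sup_{\lambda\in\R}[V(x,i_0;\lambda)-(\lambda-z)^2]$. The decisive step is the strict inequality \eqref{nonnegative2}; once it is in hand the rest reduces to a one-variable quadratic maximisation. For the weak version of \eqref{nonnegative2} I would apply It\^o's formula to $h_t^2 P^{\alpha_t}_{\pm,t}$. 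Under Assumption \ref{assu2} the interest rate is $\alpha$-free, so $\dot h_t=r_t h_t$ is deterministic; combining this with the Riccati ODEs \eqref{Riccati}, the $r_t$-contributions cancel and the drift of $h_t^2 P^{\alpha_t}_{\pm,t}$ reduces to $h_t^2\bigl[-H^{*,\al}_{\pm,t}(P_{+,t},P_{-,t})+\sum_{j\neq\al}^{\ell}q^{\al j}(P^{j}_{\pm,t}-P^{\al}_{\pm,t})\bigr]$. Evaluating $H^i_{\pm,t}$ at $v=0\in\R^m_+$ (so that $(1\pm 0)^+=1$ and $(1\pm 0)^-=0$) gives $H^i_{\pm,t}(0,P_+,P_-)=\sum_{j\neq i}q^{ij}(P^j_\pm-P^i_\pm)$, hence $H^{*,i}_{\pm,t}\leq H^i_{\pm,t}(0,\cdot)$ and the drift is nonnegative. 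Taking expectation and using $h_T=P^i_{\pm,T}=1$ then yields $1-h_0^2 P^{i_0}_{\pm,0}\geq 0$.

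To upgrade \eqref{nonnegative2} to a strict inequality I would argue by contradiction: assume $h_0^2 P^{i_0}_{-,0}=1$ and pick any $z>xh_0^{-1}$, which is feasible by \eqref{assufeas3}. For $\lambda\geq xh_0^{-1}$ Theorem \ref{verifi} gives $V(x,i_0;\lambda)=h_0^{-2}(\lambda h_0-x)^2$, so
\begin{align*}
V(x,i_0;\lambda)-(\lambda-z)^2 &= (\lambda-xh_0^{-1})^2-(\lambda-z)^2 \\
&= (z-xh_0^{-1})(2\lambda-xh_0^{-1}-z),
\end{align*}
which tends to $+\infty$ as $\lambda\to+\infty$. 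Lagrange duality would then force $\inf_{\pi\in\Pi_z}\mathrm{Var}(X_T^{\pi})=+\infty$, contradicting the finiteness of $\mathrm{Var}(X_T^{\pi^{(0)}})$ for any $\pi^{(0)}\in\Pi_z\subset\mathcal{U}$ (bounded coefficients together with $\int_0^T\E[\pi_t^2]\dt<\infty$ make $X_T^{\pi^{(0)}}$ square-integrable). Hence \eqref{nonnegative2} holds strictly. This is the hard part of the proof: unlike in Theorem \ref{Th:efficient}, where $P^ih^i\equiv\psi^i$ was extracted via BSDE uniqueness, here the map $H^{*,i}_{-,t}$ admits no closed form and the It\^o step alone is only strong enough to give $\leq$; the strict sign must be supplied by duality together with feasibility.

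For the remaining statements I would restrict to $z\geq xh_0^{-1}$; the borderline $z=xh_0^{-1}$ is realised by $\pi\equiv 0$, for which both sides of \eqref{efficientfron2} vanish. Since Step 1 also gives $h_0^2 P^{i_0}_{+,0}\leq 1$, the piecewise quadratic $\lambda\mapsto V(x,i_0;\lambda)-(\lambda-z)^2$ is concave on each branch $\{\lambda\geq xh_0^{-1}\}$ and $\{\lambda\leq xh_0^{-1}\}$, and a direct calculation shows that for $z>xh_0^{-1}$ its critical point on the lower branch lies strictly above $xh_0^{-1}$; hence the function is nondecreasing on $(-\infty,xh_0^{-1}]$ and the global maximum is attained on the upper branch. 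There the strictly concave quadratic $P^{i_0}_{-,0}(\lambda h_0-x)^2-(\lambda-z)^2$ is maximised by $\lambda^*$ as in \eqref{lambdastar2}, and a routine expansion using the identity $\lambda^*-xh_0^{-1}=(z-xh_0^{-1})/(1-h_0^2 P^{i_0}_{-,0})$ identifies the maximum value with $\frac{P^{i_0}_{-,0}}{1-h_0^2 P^{i_0}_{-,0}}(h_0z-x)^2$. By Lagrange duality this equals $\mathrm{Var}(X_T^{\pi^*})$, yielding \eqref{efficientfron2}, while substituting $\lambda=\lambda^*$ into the feedback \eqref{pistar2} produces the optimal portfolio \eqref{efficient2}.
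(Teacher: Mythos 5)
Your proposal is correct, and the first (weak-inequality) step and the final quadratic maximisation coincide with the paper's proof: It\^o's formula applied to $h_t^2P^{\alpha_t}_{-,t}$, the evaluation $H^{*,i}_{-,t}(P_+,P_-)\leq H^i_{-,t}(0,P_+,P_-)=\sum_{j\neq i}q^{ij}(P^j_--P^i_-)$, and then locating the maximiser $\lambda^*$ on the branch $x-\lambda h_0<0$ (your extra care about concavity on the lower branch, using $h_0^2P^{i_0}_{+,0}\leq 1$, is a detail the paper compresses into ``a simple calculation''). Where you genuinely diverge is the upgrade to the strict inequality \eqref{nonnegative2}. The paper argues structurally: if $1-h_0^2P^{i_0}_{-,0}=0$, the drift identity forces $H^{*,\al}_{-,t}(P_+,P_-)=\sum_{j\neq\al}q^{\al j}(P^j_--P^{\al}_-)$, whence $P^{\alpha_t}_{-,t}h_t^2$ is a martingale equal to $1$, so $P^j_{-,t}h_t^2\equiv1$ for all $j$ and $H^{*,\al}_{-,t}\equiv0$; it then exhibits, for small $v\in\R^m_+$, the quadratic $h_t^2H^{\al}_{-,t}(v,\cdot)=v^{\top}\Sigma^{\al}_tv-2v^{\top}(\mu^{\al}_t+\sum_{j\neq\al}q^{\al j}\gamma^{\al,j}_t)$, which the feasibility condition \eqref{assufeas3} makes strictly negative on a set of positive measure --- a contradiction. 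You instead observe that $h_0^2P^{i_0}_{-,0}=1$ would make $\lambda\mapsto V(x,i_0;\lambda)-(\lambda-z)^2$ affine and unbounded above on the upper branch, which by the (weak) duality inequality $\mathrm{Var}(X_T^{\pi})\geq V(x,i_0;\lambda)-(\lambda-z)^2$ forces every feasible portfolio to have infinite variance, contradicting nonemptiness of $\Pi_z$ for $z>xh_0^{-1}$ and square-integrability of $X_T^{\pi}$. Both arguments consume the feasibility hypothesis, but yours does so only through the feasibility lemma (nonemptiness of $\Pi_z$), while the paper's uses the explicit integrand form of \eqref{assufeas3}; your route is softer and shorter, whereas the paper's is self-contained at the level of the Riccati data and would survive even if one did not want to invoke the feasibility lemma for $\Pi=\R^m_+$. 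Both are valid.
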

\begin{proof}
Recalling $h_t=e^{-\int_t^Tr_s\ds}$.

\textbf{Step 1: To prove \eqref{nonnegative2}.}
Applying It\^o's formula to $P^{\alpha_t}_{-,t}h_t^2$, we get
\begin{align}\label{Phsqu}
\dd \ (P^{\alpha_t}_{-,t}h_t^2)&= h_t^2\Big[-H_{-,t}^{*,\al}(P_{+},P_{-}) +\sum_{j\neq \al}^{\ell}q^{\al j}(P^j_--P^{\al}_-)\Big]\dt\nn\medskip\\
&\quad\;+h_t^2\sum_{j\neq \al}^{\ell}(P^j_--P^{\al}_-)\dd \tilde N^{\al,j}_t.
\end{align}
Integrating from $0$ to $T$, and taking expectation, we get
\begin{align*}
1-P^{i_0}_{-,0}h_0^2=\E\int_0^T h_t^2\Big[&-H_{-,t}^{*,\al}(P_{+},P_{-}) +\sum_{j\neq \al}^{\ell}q^{\al j}(P^j_--P^{\al}_-)\Big]\dt.
\end{align*}
On the other hand, it is clear from the definition of $H_{-,t}^{*,i}$ in \eqref{defH1} that
\begin{equation*}
H_{-,t}^{*,i}(P_{+},P_{-})\leq H_{-,t}^i(0,P_{+},P_{-})=\sum_{j\neq i}^{\ell}q^{ij}(P^j_--P^i_-).
\end{equation*}
The above two inequalities lead to
\[
1-P^{i_0}_{-,0}h_0^2\geq 0,
\]
and the equality holds only if
\begin{equation}\label{contradiction2}
H_{-,t}^{*,\al}(P_{+},P_{-})= \sum_{j\neq \al}^{\ell}q^{\al j}(P^j_--P^{\al}_-).
\end{equation}
Suppose that \eqref{contradiction2} holds. It then follows from \eqref{Phsqu} that $P^{\alpha_t}_{-,t}h_t^2$ is a martingale with terminal value $1$, hence
\begin{equation*}
P^{\alpha_t}_{-,t}h_t^2\equiv1, \ t\in[0,T].
\end{equation*}
It turns out that the diffusion term in \eqref{Phsqu} must equal $0$, resulting in
\begin{equation*}
P^{j}_{-,t}h_t^2=P^{\alpha_t}_{-,t}h_t^2\equiv1, \ t\in[0,T], \ j\in\cM.
\end{equation*}
Furthermore, noting \eqref{contradiction2}, the above equality implies
\begin{equation}\label{contradiction3}
H_{-,t}^{*,\al}(P_{+},P_{-})= 0, \ t\in[0,T].
\end{equation}
Since $\beta, \gamma$ are bounded, according to Assumption \ref{assu1}, we can choose a sufficiently small constant $\varepsilon>0$, which will be fixed in the following proof, such that
$1-v^{\top}\beta^i_k>0$ and $1-v^{\top}\gamma^{i,j}> 0$ 
for all $i,j\in\cM$, $k=1,\ldots,n_2$, $v\in\R^m_+$ with $|v|\leq \varepsilon$. 
For any $v\in\R^m_+$ with $|v|\leq \varepsilon$, we have (recalling the definition of $\Sigma^i$ in Assumption \ref{assu1})
\begin{align*}
h^2_tH^{\al}_{-,t}(v,P_{+,t},P_{-,t})
&=|v^{\top}\sigma^{\al}_t|^2-2\Big(v^{\top}\mu^{\al}_t-\int_{\cE}v^{\top}\beta^{\al}_t\nu(\de)\Big)\medskip\\
&\quad\;+\sum_{k=1}^{n_2}\int_{\cE}\Big[(1-v^{\top}\beta_{k,t}^{\al})^2-1\Big]\nu_k(\de)\medskip\\
&\quad\;+\sum_{j\neq \al}^{\ell}q^{\al j}\Big[ (1-v^{\top}\gamma^{\al,j}_t)^2-1\Big] \medskip\\
&=v^{\top} \Sigma_t^{\al}v-2v^{\top}(\mu^{\al}_t+\sum_{j\neq\al}q^{\al j}\gamma^{\al,j}_t).
\end{align*}
It is easily seen that the above quadratic function will take strict negative values for proper $v\in\R^m_+$ on a set of $(t,\omega)$ with positive measure under the feasible condition \eqref{assufeas2} or \eqref{assufeas3}. This contradicts \eqref{contradiction3}.
So we proved \eqref{nonnegative2}.

\textbf{Step 2:}
According to Theorem \ref{verifi},
\begin{align*}
V(x,i_0;\lambda)=P^{i_0}_{+,0}[(x-\lambda h_0)^+]^2+P^{i_0}_{-,0}[(x-\lambda h_0)^-]^2
\end{align*}
is a piecewise quadratic function of $\lambda$. 
{Using $zh_0>x$ and the inequality $h_0^2P^{i_0}_{-,0}<1$ from \eqref{nonnegative2}, it follows $$x-\lambda^*h_0=\frac{x-zh_0}{1-h_0^2P^{i_0}_{-,0}}<0,$$
where $\lambda^*$ is given by \eqref{lambdastar2}.
A simple calculation shows that} 
\begin{align*}
\sup_{\lambda\in\R}[V(x,i_0;\lambda)-(\lambda-z)^{2}]&=V(x,i_0;\lambda^*)-(\lambda^*-z)^{2} =\frac{P^{i_0}_{-,0}}{1-h^2_0 P^{i_0}_{-,0}}(zh_0-x)^2,
\end{align*}
and the the maximum value equals exactly the right-hand side of \eqref{efficientfron2}. Finally, the optimal portfolio \eqref{efficient2} is obtained by \eqref{pistar2} with $\lambda=\lambda^*$.
The proof is complete.
\end{proof}

\section{Conclusions}\label{sec:conclusion}
In this paper, we investigated MV portfolio selection problems under regime-switching-induced stock price shocks with and without no-shorting constraint. We derived the optimal portfolio strategy and the efficient frontier by solving three systems of ODEs, including one multi-dimensional, fully coupled, and highly nonlinear Riccati equation. Unfortunately, due to technical limitations, we can only deal with Markovian models. It is of great interest to solve the non-Markovian counterpart, which would lead to study of highly nonlinear BSDEs.

\newpage

\end{document}